\documentclass[11pt,onecolumn]{IEEEtran}
\usepackage[margin=.7in]{geometry}
\linespread{1.035}
\setlength{\parskip}{5pt} 

\usepackage[font=small]{caption}
\usepackage{amsmath}
\usepackage{amssymb}
\usepackage{amsfonts}
\usepackage{amsthm}
\usepackage{setspace}
\usepackage{cite}
\usepackage{ifthen}
\usepackage{epsfig}
\usepackage{array}
\usepackage{enumerate}
\usepackage{subfig}
\usepackage{algorithm}
\usepackage[all]{xy}
\usepackage{algpseudocode}
\usepackage{times}
\usepackage{tikz}
\usepackage{paralist}
\usepackage{float}
\usepackage{dblfloatfix}
\usetikzlibrary{shapes,positioning,arrows,automata}



\newtheorem{thm}{Theorem}
\newtheorem{theorem}[thm]{Theorem}
\newtheorem{proposition}[thm]{Proposition}
\newtheorem{lem}[thm]{Lemma}

\newtheorem{example}{Example}
\newtheorem{definition}{Definition} 

\newcommand{\request}{r}
\newcommand{\beq}{\begin{equation}}
\newcommand{\eeq}{\end{equation}}
\newcommand{\bea}{\begin{eqnarray}}
\newcommand{\eea}{\end{eqnarray}}
\newcommand{\bean}{\begin{eqnarray*}}
\newcommand{\eean}{\end{eqnarray*}}
\newcommand{\bit}{\begin{itemize}}
\newcommand{\eit}{\end{itemize}}
\newcommand{\ben}{\begin{enumerate}}
\newcommand{\een}{\end{enumerate}}
\newcommand{\blem}{\begin{lem}}
\newcommand{\elem}{\end{lem}}
\newcommand{\bthm}{\begin{thm}}
\newcommand{\ethm}{\end{thm}}
\newcommand{\bpf}{\begin{IEEEproof}}
\newcommand{\epf}{\end{IEEEproof}}

\newcommand{\comment}[1]{}
\newcommand{\supth}{^{\textrm{th}}}
\newcommand{\On}{\State\hspace{-.35cm} \textbf{On} }

\newcommand{\spaceBetweenScheduling}{1cm}


\emergencystretch=\maxdimen 

\title{\vspace{.25in}When Do Redundant Requests Reduce Latency ?}
\author{Nihar B. Shah, Kangwook Lee, Kannan Ramchandran\\Department of Electrical Engineering and Computer Sciences,\\University of California at Berkeley\\\{nihar,\,kw1jjang,\,kannanr\}@eecs.berkeley.edu
}

\begin{document}
\maketitle
\thispagestyle{empty}

\begin{abstract}
Several systems possess the flexibility to serve requests in more than one way. For instance, a distributed storage system storing multiple replicas of the data can serve a request from any of the multiple servers that store the requested data, or a computational task may be performed in a compute-cluster by any one of multiple processors. In such systems, the latency of serving the requests may potentially be reduced by sending \textit{redundant requests}: a request may be sent to more servers than needed, and it is deemed served when the requisite number of servers complete service. Such a mechanism trades off the possibility of faster execution of at least one copy of the request with the increase in the delay due to an increased load on the system. Due to this tradeoff, it is unclear when redundant requests may actually help. Several recent works empirically evaluate the latency performance of redundant requests in diverse settings.

This work aims at an analytical study of the latency performance of redundant requests, with the primary goals of characterizing under what scenarios sending redundant requests will help (and under what scenarios they will not help), as well as designing optimal redundant-requesting policies. We first present a model that captures the key features of such systems. We show that when service times are i.i.d. memoryless or ``heavier'', and when the additional copies of already-completed jobs can be removed instantly, redundant requests reduce the average latency. On the other hand, when service times are ``lighter'' or when service times are memoryless and removal of jobs is not instantaneous, then not having any redundancy in the requests is optimal under high loads. Our results hold for arbitrary arrival processes.
\end{abstract}

\section{Introduction}
Several systems possess the flexibility to serve requests in more than one way. For instance, in a cluster with $n$ processors, a computation may be performed at any one of the $n$ processors; in a distributed storage system where data is stored using an $(n,k)$ Reed-Solomon code, a read-request may be served by reading data from any $k$ of the $n$ servers; in a network with $n$ available paths from the source to the destination, communication may be performed by transmitting across any one of the $n$ paths. 
In such settings, the latency of serving the requests can potentially be reduced by sending \textit{redundant requests}. Under a policy of sending redundant requests, each request is attempted to be served in more than one way. The request is deemed served when it is served in any one of these ways. Following this, the other copies of this request may be removed from the system. 

It is unclear whether or not such a policy of having redundant requests will actually reduce the latency. On one hand, for any individual request, one would expect the latency to reduce since the time taken to process the request is the \textit{minimum} of the processing times of its multiple copies. On the other hand, introducing redundancy in the requests consumes additional resources and increases the overall load on the system, thereby adversely affecting the latency.

Many recent works such as~\cite{snoeren2001mesh,andersen2005improving,pitkanen2007redundancy,han2011rpt,ananthanarayanan2012let,huang2012erasure,vulimiri2012more,dean2013tail,liang2013fast,stewart2013zoolander,flach2013reducing} perform empirical studies on the latency performance of sending redundant requests, and report reductions in latency in several scenarios (but increases in some others). However, despite a significant interest among practitioners, to the best of our knowledge, no rigorous analysis is known as to when redundant requests help in reducing latency (and when not). This precisely forms the goal of this work. We consider a model based on the `MDS queue' model~\cite{MDSqueue}, which captures some of the key features of such systems, and can serve as a building block for more complex systems. 
Under this model, for several classes of distributions of the arrival, service and removal times, we derive the optimal redundant-requesting policies. These results are summarized in Table~\ref{tab:summary_of_results}. 
Our proof techniques allow for arbitrary arrival sequences and are \textit{not} restricted to (asymptotic) steady-state settings.

\begin{table*}
\centering
\begin{tabular}{|l|l|l|l|l|c|l||l|c|}
\hline
$n$&$k$&\shortstack{Arrival\\process}&\shortstack{Service\\distribution}& Buffers&\shortstack{Removal\\cost}&Load&\textbf{Optimal policy}&Theorem \#\\
\hline
\hline
any&1&any&i.i.d., memoryless & centralized & 0 & any & send to all&\ref{thm:rep_flood}\\
any&any&any& i.i.d., memoryless&centralized&0&any&send to all&\ref{thm:memoryless_general_k}\\
any & 1 & any & i.i.d., heavy-everywhere & centralized & 0 & high & send to all&\ref{thm:heavier_flood}\\
any & 1 & any & i.i.d., light-everywhere & centralized & any & high & no redundancy&\ref{thm:lighter_noflood}\\
any & 1 & any & i.i.d., memoryless & centralized & $>$0 & high & no redundancy&\ref{thm:removalCosts_memless}\\
any&any&any&i.i.d., memoryless & distributed & 0 & any & send to all&\ref{thm:memoryless_general_k_distributed}\\
any & 1 & any & i.i.d., heavy-everywhere & distributed & 0 & high & send to all&\ref{thm:heavy_distributed}\\
any & 1 & any & i.i.d., light-everywhere & distributed & any & high & no redundancy&\ref{thm:lighter_noflood_distributed}\\
any & 1 & any & i.i.d., memoryless & distributed & $>$0 & high & no redundancy&\ref{thm:removalCosts_memless_distributed}\\
\hline
\end{tabular}
\caption{Summary of analysis of \textit{when} redundant requests reduce latency, and the optimal policy of redundant-requesting under various settings. The `heavy-everywhere' and `light-everywhere' classes of distributions are defined subsequently in Section~\ref{sec:analytical}. An example of a heavy-everywhere distribution is a mixture of exponential distributions; two examples of light-everywhere distributions are an exponential distribution shifted by a constant, and the uniform distribution. By `high load' we mean a $100\%$ utilization of the servers.}
\label{tab:summary_of_results}
\end{table*}

The remainder of this paper is organized as follows. Section~\ref{sec:literature} discusses related literature. Section~\ref{sec:model} presents the system model with a centralized buffer. Section~\ref{sec:analytical} presents analytical results for such a centralized setting. Section~\ref{sec:model_distributed} describes a distributed setting where each server has its own buffer. Section~\ref{sec:analytical_distributed} presents analytical results under this distributed setting. Finally, Section~\ref{sec:conclusion} presents conclusions and discusses open problems. Appendix~\ref{app:everywhere_dist} presents properties and examples of the \textit{heavy-everywhere} and \textit{light-everywhere} distributions defined in the paper. Appendix~\ref{app:proofs} contains proofs of all the analytical results.

\section{Related Literature}\label{sec:literature}
Policies that try to reduce latency by sending redundant requests have been previously studied, largely empirically, in~\cite{snoeren2001mesh,andersen2005improving,pitkanen2007redundancy,han2011rpt,ananthanarayanan2012let,huang2012erasure,vulimiri2012more,dean2013tail,liang2013fast,stewart2013zoolander,flach2013reducing}. These works evaluate system performance under redundant requests for several applications, and report reduction in the latency in many cases. For instance, Ananthanarayanan et al.~\cite{ananthanarayanan2012let} consider the setting where requests take the form of computations to be performed at processors. In their setting, requests have diffferent workloads, and the authors propose adding redundancy in the requests with lighter workloads. They observe that on the PlanetLab network, the average completion time of the requests with lighter workloads improves by 47\%, at the cost of just 3\% extra resources. Huang et al.~\cite{huang2012erasure} consider a distributed stoage system where the data is stored using an $(n=16,\ k=12)$ Reed-Solomon code. For $k' \in \{12,13,14,15\}$, they perform the task of decoding the original data by connecting to $k'$ of the nodes and decoding from the $k$ pieces of encoded data that arrive first. They empirically observe that the latency reduces upon increase in $k'$. In a related setup, codes and algorithms tailored specifically for employing redundant requests in distributed storage are designed in~\cite{rashmi2012errors} for latency-sensitive settings, allowing for data stored in a busy or a failed node to be obtained by downloading little chunks of data from other nodes. In particular, these codes provide the ability to connect to more nodes than required and use the data received from the first subset to respond, treating the other slower nodes as \textit{erasures}. Vulimiri et al.~\cite{vulimiri2012more} propose sending DNS queries to multiple servers. They observe that on PlanetLab servers, the latency of the DNS queries reduces with an increase in the number of DNS servers queried. Dean and Barroso~\cite{dean2013tail} observe a reduction in latency in Google's system when  requests are sent to two servers instead of one. Liang and Kozat~\cite{liang2013fast} perform experiments on the Amazon EC2 cloud. They observe that when the rate of arrival of the requests is low, the latency reduces when the requests are sent to a higher number of servers. However, when the rate of arrival is high, they observe that a high redundancy in the requests increases the latency.

\begin{algorithm*}[t!]
\begin{algorithmic}
\On arrival of a request (``batch'')
\State divide the batch into $\request$ jobs
\State assign as many jobs (of the new batch) as possible to idle servers
\State append the remaining jobs (if any) as a new batch at the end of the buffer 
~\\
\On completion of processing by a server (say, server $s_0$)
\State let set $\mathcal{S} = \{s_0\}$ 
\If {the job that departed from $s_0$ was the $k\supth$ job served from its batch}
\For {every server that is also serving jobs from this batch}
\State remove the job and add this server to set $\mathcal{S}$
\EndFor
\EndIf
\For {each $s \in \mathcal{S}$}
\If {there exists at least one batch in the buffer such that no job of this batch has been served by $s$}
\State among all such batches, find the batch that had arrived earliest
\State assign a job from this batch to $s$
\EndIf
\EndFor
\end{algorithmic}
\caption{First-come, first-served scheduling policy with redundant requests}
\label{alg:redundant_requests}
\end{algorithm*}

To the best of our knowledge, there has been little theoretical characterization of analysing under what settings sending redundant requests would help (and under what settings it would not help). Two exceptions relating to theoretical results in this area that we are aware of are~\cite{joshi2012coding} and~\cite{liang2013fast}. In~\cite{joshi2012coding}, Joshi et al. consider the arrival process to be Poisson, and the service to be i.i.d. memoryless, and provide bounds on the average latency faced by a batch in the steady state when the requests are sent (redundantly) to \textit{all} the servers. However, no comparisons are made with other schemes involving redundant requests, including the scheme of having no redundancy in the requests.
In fact, our work can be considered as complementary to that of~\cite{joshi2012coding}, in that we complete this picture by establishing that under the models considered therein, sending (redundant) requests to all servers is indeed the optimal choice. In~\cite{liang2013fast}, Liang and Kozat provide an approximate analysis of a system similar to that described in this paper under the assumption that arrivals follow a Poisson process; using insights from their approximations, they experiment with certain scheduling policies on the Amazon EC2 cloud. However, no analysis or metrics for accuracy of these approximations are provided, nor is there any treatment of whether these approximations lead to any useful upper or lower bounds.

\section{System Model: Centralized Buffer}\label{sec:model}
We will first describe the system model followed by an illustrative example. The model is associated to three parameters: two parameters $n$ and $k$ that are associated to the system, and the third parameter $\request$ that is associated to the redundant-requesting policy. The system comprises a set of $n$ servers. A request can be served by any \textit{arbitrary} $k$ \textit{distinct} servers out of this collection of $n$ servers. Several applications fall under the special case of $k=1$: a compute-cluster where computational tasks can be performed at any one of multiple processors, or a data-transmission scenario where requests comprise packets that can be transmitted across any one of multiple routes, or a distributed storage system with data replicated in multiple servers. Examples of settings with $k>1$ include: a distributed storage system employing an $(n,\ k)$ Reed-Solomon code wherein the request for any data can be served by downloading the data from any $k$ of the $n$ servers, or a compute-cluster where each job is executed at multiple processors in order to guard from possible errors during computation. 

The policy of redundant requesting is associated to a parameter $\request~(k\leq \request \leq n)$ which we call the `request-degree'. Each request is sent to $\request$ of the servers, and upon completion of any $k$ of these, it is deemed complete. To capture this, we consider each request as a \textit{batch} of $\request$ \textit{jobs}, wherein each of the $\request$ jobs can be served by any arbitrary $\request$ distinct servers. The batch is deemed served when any $k$ of its $r$ jobs are serviced. At this point in time, the remaining $(\request-k)$ jobs of this batch are removed from the system. Such a premature removal of a job from a server may lead to certain overheads: the server may need to remain idle for some (random) amount of time before it becomes ready to serve another job. We shall term this idle time as the \textit{removal cost}.
\begin{figure*}[t!]
\vspace{-1.3cm}
\centering
\subfloat[]{
\includegraphics[width=.18\textwidth]{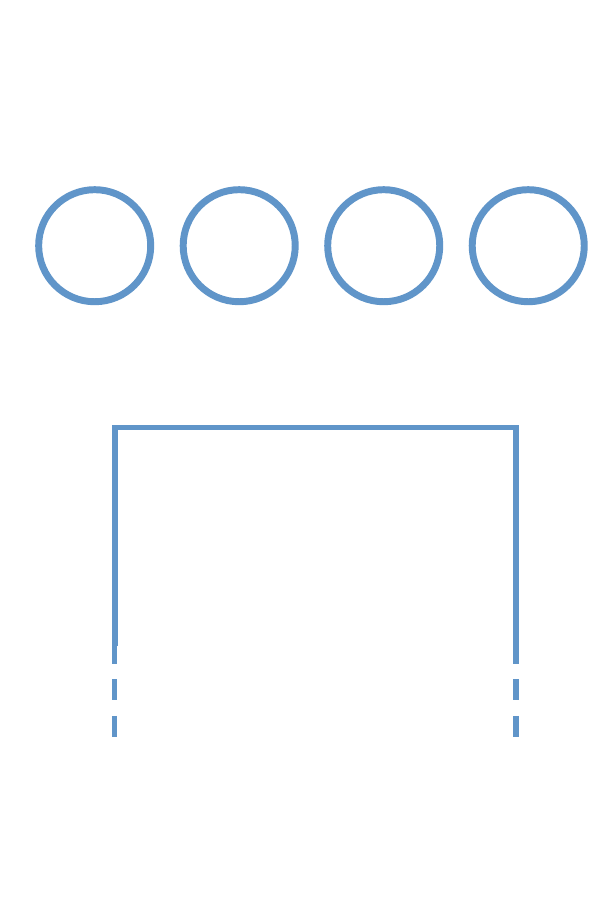}
\label{fig:scheduling_a}
}\hspace{\spaceBetweenScheduling}
\subfloat[]{
\includegraphics[width=.18\textwidth]{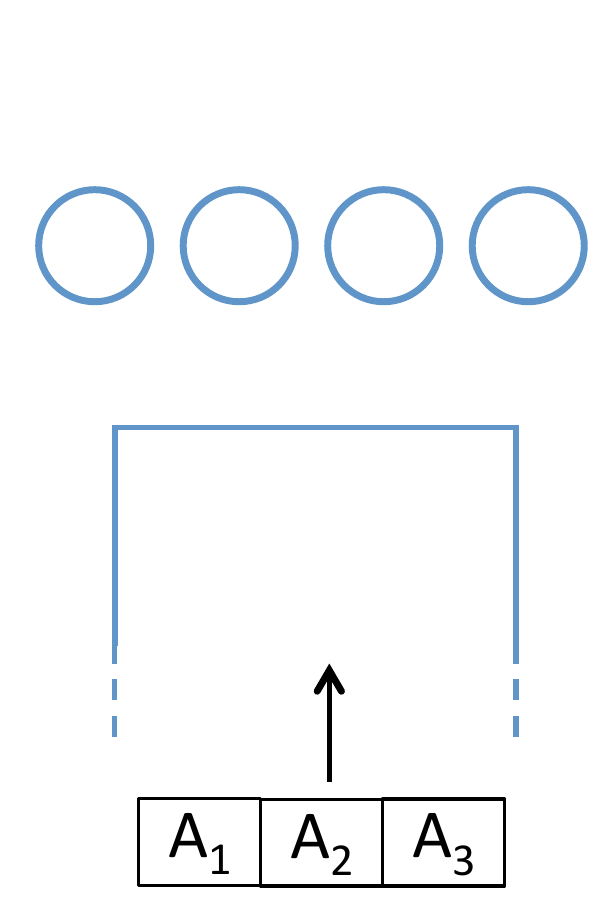}
\label{fig:scheduling_b}
}\hspace{\spaceBetweenScheduling}
\subfloat[]{
\includegraphics[width=.18\textwidth]{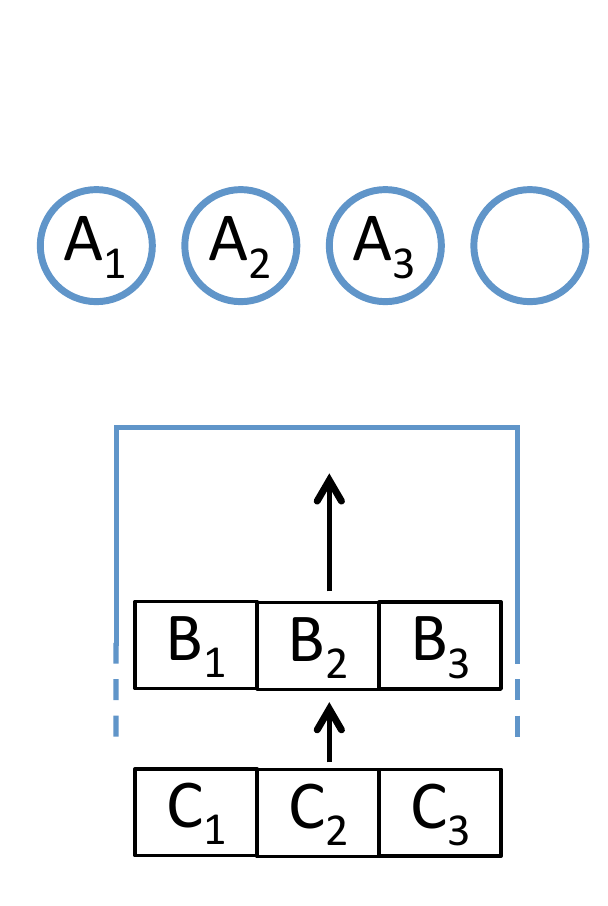}
\label{fig:scheduling_c}
}\hspace{\spaceBetweenScheduling}
\subfloat[]{
\includegraphics[width=.18\textwidth]{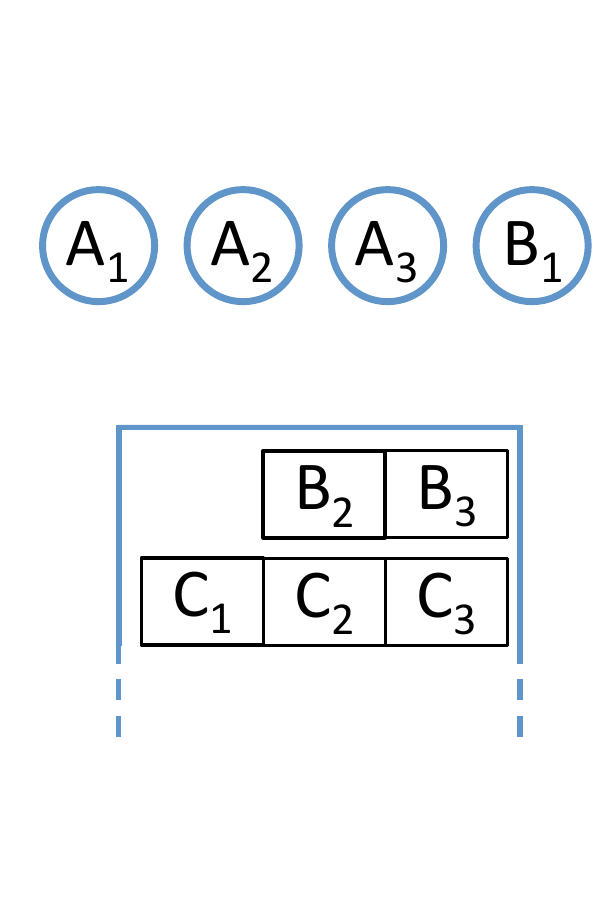}
\label{fig:scheduling_d}
}\\
\subfloat[]{
\includegraphics[width=.18\textwidth]{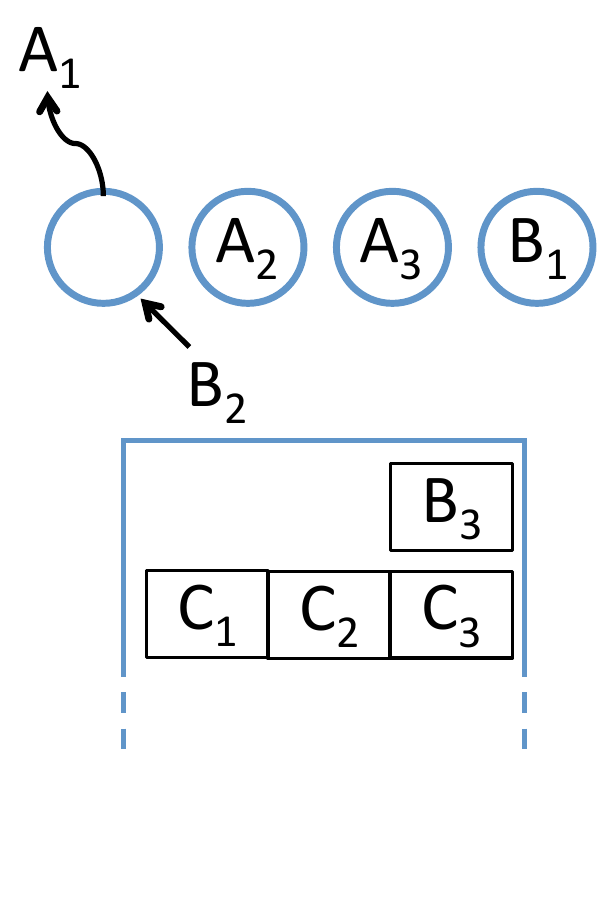}
\label{fig:scheduling_e}
}\hspace{\spaceBetweenScheduling}
\subfloat[]{
\includegraphics[width=.18\textwidth]{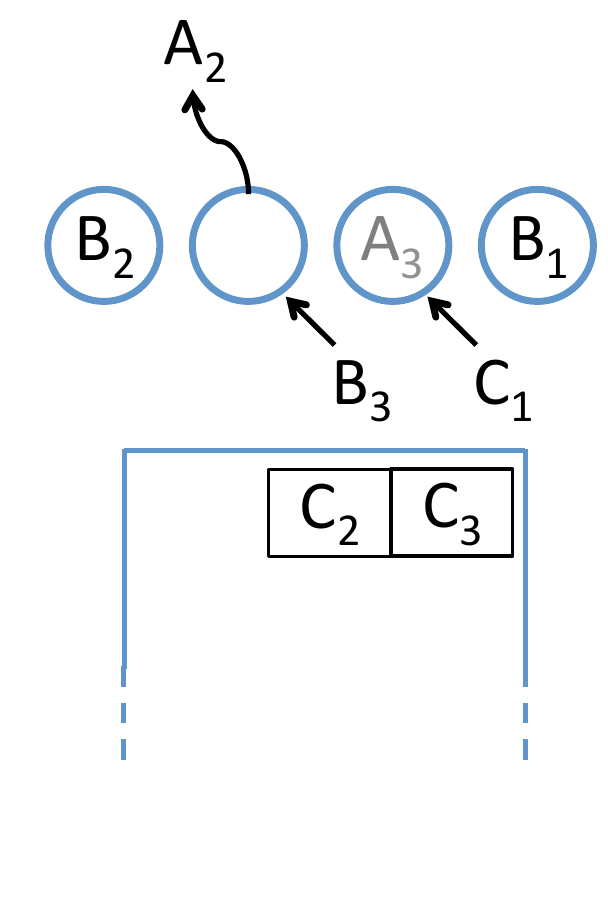}
\label{fig:scheduling_f}
}\hspace{\spaceBetweenScheduling}
\subfloat[]{
\includegraphics[width=.18\textwidth]{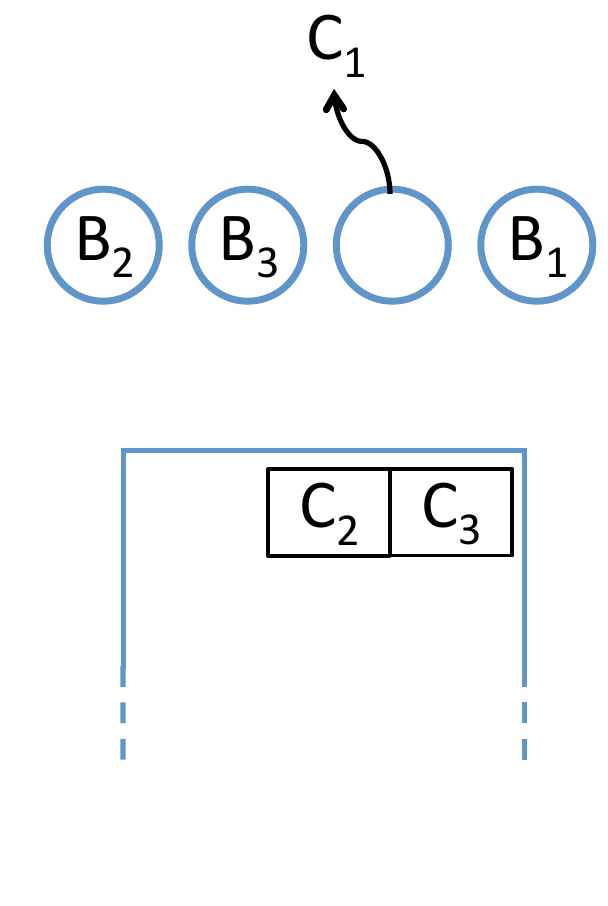}
\label{fig:scheduling_g}
}\hspace{\spaceBetweenScheduling}
\subfloat[]{
\includegraphics[width=.18\textwidth]{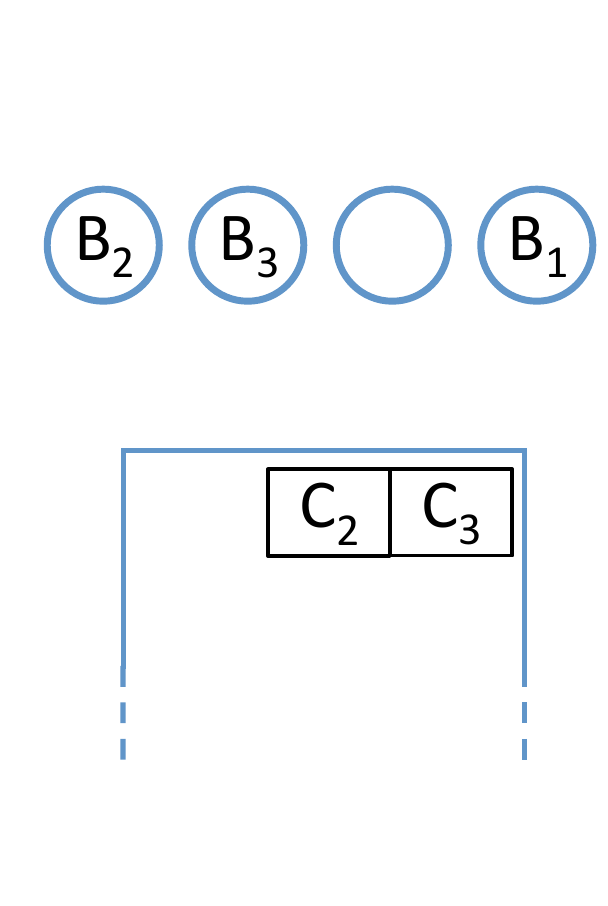}
\label{fig:scheduling_h}
}\caption{Illustration of the setting for parameters $n=4$, $k=2$ and request-degree $\request=3$, as described in Example~\ref{ex:setting}. 
}
\label{fig:scheduling}
\end{figure*}

We assume that the time that a server takes to service a job is independent of the arrival and service times of other jobs. We further assume that the jobs are processed in a first-come-first-served fashion, i.e., among all the waiting jobs that an idle server can serve, it serves the one which had arrived the earliest. Finally, to be able to perform valid comparisons, we assume that the system is stable in the absence of any redundancy in the requests (i.e., when $\request=k$). The arrival process may be \textit{arbitrary}, and the only assumption we make is that the arrival process is independent of the present and past states of the system.

We consider a centralized queueing system in this section, where requests enter into a (common) buffer of infinite capacity. The choice of the server that serves a job may be made at any point in time. (This is in contrast to the distributed system considered subsequently in Section~\ref{sec:model_distributed}, wherein this choice must be made upon arrival of the request into the system). 

The scheduling algorithm is formalized in Algorithm~\ref{alg:redundant_requests}. Note that the case of $\request=k$ corresponds to the case where no redundancy is introduced in the requests, while $\request=n$ corresponds to maximum redundancy with each batch being sent to all the servers.

~\\
The following example illustrates the working of the system.
\begin{example}\label{ex:setting}
{\it
Fig.~\ref{fig:scheduling} illustrates the system model and the working of Algorithm~\ref{alg:redundant_requests} when $n=4$, $k=2$ and $\request=3$. The system has $n=4$ servers and a common buffer as shown in Fig.~\ref{fig:scheduling_a}. Let us denote the four servers (from left to right) as servers $1$, $2$, $3$ and $4$. Each request comes as a batch of $\request=3$ jobs, and hence we denote each batch (e.g., $A$, $B$, $C$, etc.) as a triplet of jobs (e.g., $\{A_1,A_2,A_3\}$, $\{B_1,B_2,B_3\}$, $\{C_1,C_2,C_3\}$, etc.). A batch is considered served if any $k=2$ of the $\request=3$ jobs of that batch are served.

Fig.~\ref{fig:scheduling_b} depicts the arrival of batch $A$. As shown in Fig.~\ref{fig:scheduling_c}, three of the idle servers begin serving the three jobs $\{A_1,A_2,A_3\}$. Fig.~\ref{fig:scheduling_c} depicts the arrival of batch $B$ followed by batch $C$. Server $4$ begins service of job $B_1$ as shown in Fig.~\ref{fig:scheduling_d}, while the other jobs wait in the buffer. Now suppose server $1$ completes servicing job $A_1$ (Fig.~\ref{fig:scheduling_e}). This server now becomes idle to serve any of the jobs remaining in the buffer. We allow jobs to be processed in a first-come first-served manner, and hence server $1$ begins servicing job $B_2$ (assignment of $B_3$ instead would also have been valid). Next, suppose the second server completes service of $A_2$ before any other servers complete their current tasks (Fig.~\ref{fig:scheduling_f}). This results in the completion of a total of $k=2$ jobs of batch $A$, and hence batch $A$ is deemed served and is removed the system. In particular, job $A_3$ is removed from server $3$ (this may cause the server to remain idle for some time, depending on the associated removal cost). Servers $2$ and $3$ are now free to serve other jobs in the buffer. These are now populated with jobs $B_3$ and $C_1$ respectively. Next suppose server $3$ completes serving $C_1$ (Fig.~\ref{fig:scheduling_g}). In this case, since server $3$ has already served a job from batch $C$, it is not allowed to service $C_2$ or $C_3$ (since the jobs of a batch must be processed by distinct servers). Since there are no other batches waiting in the buffer, server $3$ thus remains idle (Fig.~\ref{fig:scheduling_h}).
}
\end{example}


\section{Analytical Results for the Centralized Buffer Setting}\label{sec:analytical}
In this section, we consider the model presented in Section~\ref{sec:model} that has a centralized buffer. We find redundant-requesting policies that minimize the average latency under various settings. This minimization is not only over redundant-requesting policies with a fixed value of the request-degree $r$ (as described in Section~\ref{sec:model}) but also over policies that can choose different request-degrees for different batches. The proofs of these results are provided in Appendix~\ref{app:proofs}.

The first two results, Theorems~\ref{thm:rep_flood} and~\ref{thm:memoryless_general_k}, consider the service times to follow an exponential (memoryless) distribution. 

\begin{theorem}[memoryless service, no removal cost, $k=1$]\label{thm:rep_flood}
Consider a system with $n$ servers such that any one server suffices to serve any request, the service-time is i.i.d. memoryless, and jobs can be removed instantly from the system. For any $\request_1 < \request_2$, the average latency in a system with request-degree $\request_1$ is larger than the average latency in a system with  request-degree $\request_2$. Furthermore, the distribution of the buffer occupancy in the system with  request-degree $\request_1$ dominates (is larger than) that of the system with  request-degree $\request_2$. Finally, among all possible redundant requesting policies, the average latency is minimized when each batch is sent to all $n$ servers, i.e., when request-degree $\request=n$.
\end{theorem}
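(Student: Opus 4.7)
The plan is to reduce all three assertions of the theorem (pairwise comparison of request-degrees, stochastic dominance of buffer occupancy, and optimality of $r=n$ across all redundant-requesting policies) to a single pathwise coupling statement about the number of batches in the system, $N_r(t)$. Since $k=1$, every batch's sojourn ends the instant it loses one copy, and the cumulative batch latency over any time window equals $\int N_r(t)\,dt$; so if I establish the pathwise inequality $N_{r_2}(t) \leq N_{r_1}(t)$ for $r_1 < r_2$, then the average-latency comparison and the stochastic dominance of occupancy follow at once, and a parallel coupling against an arbitrary policy will yield the $r=n$ optimality.

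The key structural step, which is exactly where the memoryless, $k=1$, and zero-removal-cost hypotheses are used, is that when $N_r(t) = N$ the instantaneous batch-departure rate is exactly $\min(n, rN)\mu$. The FCFS rule in Algorithm~\ref{alg:redundant_requests} keeps $\min(n, rN)$ servers simultaneously busy; by the memoryless property each busy server's residual service time is $\mathrm{Exp}(\mu)$ regardless of how long it has been serving; a completion under $k=1$ removes its entire batch and, with zero removal cost, immediately liberates every other server holding a copy to pick up a queued batch. Hence the coarse process $N_r(\cdot)$ is itself Markovian given the arrival stream, with transitions driven only by $N$.

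Next I couple two systems of request-degrees $r_1 < r_2$ by (i) feeding them the same arrival process and (ii) driving their service events off a common Poisson clock of rate $n\mu$: at each clock tick a single uniform $U \in [0,1]$ is compared against the two nested thresholds $\min(n, r_1 N_1)/n$ and $\min(n, r_2 N_2)/n$, which determines whether each system completes a batch. I then prove the invariant $N_{r_2}(t) \leq N_{r_1}(t)$ by induction over event epochs. Arrivals are trivial. At a potential completion, the only delicate case is when $r_1 N_1 > r_2 N_2$, so that system~1 can complete without system~2 doing so; but this inequality, together with $r_1 < r_2$, forces $N_1 > N_2$ strictly, hence the transition $N_1 \mapsto N_1 - 1$ still respects $N_{r_2} \leq N_{r_1}$. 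Pathwise dominance of $N$ yields pathwise dominance of cumulative latency and stochastic dominance of the buffer occupancy.

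To upgrade optimality from "every fixed $r<n$" to "every redundant-requesting policy," I rerun the same coupling against an arbitrary policy $\pi$, using the universal bound that any policy's instantaneous batch-departure rate is at most $n\mu$ (only $n$ servers, each at rate $\mu$), whereas under $r=n$ the rate is exactly $n\mu$ whenever $N \geq 1$. Thus the $r=n$ acceptance event can be nested outside that of $\pi$, giving $N_n(t) \leq N_\pi(t)$ pathwise. The step I expect to be the main obstacle is the invariant check in the crossed regime $r_1 N_1 > r_2 N_2$, and in particular making sure that the reduction to $N$ alone as a sufficient statistic really is licensed by the memoryless plus zero-removal-cost hypotheses — if either fails, the residual service or removal times become history-dependent and the thinning coupling would no longer be valid.
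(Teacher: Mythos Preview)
Your approach is correct and essentially identical to the paper's: both couple the two systems on a common sample path and prove by induction over events that the number of batches satisfies $N_{r_2}(t)\le N_{r_1}(t)$, relying on the same structural fact that with $k=1$, memoryless service, and zero removal cost, exactly $\min(n,rN)$ servers are busy whenever $N$ batches are present; you package the coupling as uniformization (one rate-$n\mu$ Poisson clock thinned by $\min(n,rN)/n$), while the paper keeps $n$ independent exponential clocks running even at idle servers and permutes the servers after each event so that busy ones occupy the lowest indices---these two devices are equivalent. One small omission: the theorem asserts a \emph{strict} ordering of average latencies, whereas your pathwise $\le$ yields only weak inequality; the paper closes this gap by exhibiting a positive-probability scenario (both systems empty, then a single batch arrives and is served on $r_2>r_1$ servers) in which the coupled occupancies separate strictly.
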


~\\
\begin{theorem}[memoryless service, no removal cost, general $k$]\label{thm:memoryless_general_k}
Consider a system with $n$ servers such that any $k$ of them can serve a request, the service-time is i.i.d. memoryless, and jobs can be removed instantly from the system. The average latency is minimized when all batches are sent to all the servers, i.e., when $\request=n$ for every batch. Furthermore, the distribution of the buffer occupancy in the system with request-degree $\request=n$ is strictly dominated by (i.e., is smaller than) a system with any other request-degree. 
\end{theorem}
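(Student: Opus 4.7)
The plan is to generalize the sample-path coupling used for Theorem~\ref{thm:rep_flood} to arbitrary $k$. Let $\mathcal{A}$ be a system running the flooding policy $r=n$ and $\mathcal{B}$ a system running any other redundant-requesting policy (possibly with request-degree chosen per batch, subject to $k\leq r\leq n$). Both systems see the common arrival sequence, and service completions are coupled via $n$ independent rate-$\mu$ Poisson clocks, one per server, with the rule that a firing of server $s$'s clock completes server $s$'s current job (if any) in each system simultaneously. The memoryless property of the service distribution makes this joint coupling marginally correct for each system.

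The target invariant, to be proved by induction over the sequence of arrival and clock-firing events, is the pathwise dominance $N_\mathcal{A}(t)\leq N_\mathcal{B}(t)$ of the in-system batch counts. Arrivals preserve this trivially since both systems append the same batch to the buffer. For clock firings the argument hinges on the following structural property of $r=n$: because every admitted batch starts with a job for every server, any idle server $s$ in $\mathcal{A}$ must have already served a job of every batch currently present in $\mathcal{A}$ that has not saturated its $k$ completions, so in particular whenever $\mathcal{A}$'s batch count exceeds the number of batches $s$ has ``touched'' it is guaranteed that $s$ is busy in $\mathcal{A}$. Under the more restrictive request-degree used in $\mathcal{B}$, a server $s$ can remain idle even with unserved batches present because the batch's $r<n$ slots may already be filled. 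This lets one show, firing by firing, that productive completions accumulate at least as fast in $\mathcal{A}$ as in $\mathcal{B}$, and since each system needs exactly $k$ productive completions per batch, the dominance of cumulative batch-completion counts, hence $N_\mathcal{A}\leq N_\mathcal{B}$, is maintained.

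The main obstacle is making the firing-by-firing accounting rigorous under the distinct-server constraint, because the identity of the batch each productive firing ``credits'' differs across the two systems. To handle this I would introduce an integer-valued potential
\begin{equation*}
\Phi(t) \;=\; \sum_{b}\bigl(\text{remaining needed completions of batch $b$ in }\mathcal{B}\bigr) \;-\; \sum_{b}\bigl(\text{remaining needed completions of batch $b$ in }\mathcal{A}\bigr)
\end{equation*}
under the natural FCFS batch-to-batch matching induced by the common arrival sequence, and verify that $\Phi(t)\geq 0$ is preserved at every event. The key step is to check that whenever a clock firing of server $s$ yields a productive completion in $\mathcal{B}$ but not in $\mathcal{A}$, the firing is charged in $\mathcal{A}$ to a batch that has already been completed there (so $\Phi$ is compensated rather than worsened); the flooding nature of $r=n$ guarantees that $\mathcal{A}$ never runs out of batches on which to spend an otherwise-productive clock firing before $\mathcal{B}$ does. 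Once the pathwise dominance $N_\mathcal{A}\leq N_\mathcal{B}$ is established, the stochastic dominance of buffer occupancy, and by Little's law the average-latency minimization at $r=n$, follow immediately. Strict dominance when $r\neq n$ arises from the positive-probability existence of intervals in which $\mathcal{B}$ is in a configuration where a currently-idle server cannot help a batch it has not served (because that batch's $r$ slots are already saturated), while $\mathcal{A}$ would be using that server; this opens a strictly positive gap in $\Phi$ that the coupling preserves thereafter.
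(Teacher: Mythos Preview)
Your coupling via shared Poisson clocks is exactly the one the paper uses, and you have correctly isolated the difficulty: for general $k$ the distinct-server constraint makes a naive firing-by-firing comparison fail because the identity of the batch credited differs in the two systems. However, the scalar potential $\Phi$ you propose is not strong enough to carry the induction, and the paper's argument is genuinely different at precisely this point.

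First, note that your $\Phi$ is nothing but $P_{\mathcal A}(t)-P_{\mathcal B}(t)$, the difference in \emph{total} productive firings: since each present batch contributes $k$ minus its completion count, summing gives $kA(t)-P_X(t)$ in either system, and the $kA(t)$ terms cancel. But the target is $N_{\mathcal A}\le N_{\mathcal B}$, i.e.\ $D_{\mathcal A}\ge D_{\mathcal B}$, and $P_{\mathcal A}\ge P_{\mathcal B}$ does not imply this: with $k\ge 2$ one can have equal total productive work yet more departures in one system than the other, because departures depend on how the $k$ completions are distributed across batches, not just on their total. So even if you could show $\Phi\ge 0$, it would not deliver the batch-count dominance you need. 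Second, your maintenance argument for $\Phi\ge 0$ has a hole at the step you flag as ``key'': when a firing is productive in $\mathcal B$ but not in $\mathcal A$, $\Phi$ drops by one; the claim that this is ``compensated'' because the firing is ``charged in $\mathcal A$ to a batch already completed there'' is an amortized statement, not a per-step one, and you have not ruled out that the earlier credit was already spent. In particular, the assertion that ``$\mathcal A$ never runs out of batches on which to spend a firing before $\mathcal B$ does'' is precisely the conclusion you are trying to prove, not something the flooding property gives you for free---indeed $\mathcal A$ empties faster, so a given server can become idle in $\mathcal A$ while still busy in $\mathcal B$.

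The paper closes this gap with a different and more delicate invariant. Instead of a scalar potential, it carries a \emph{two-part} induction hypothesis: (a) $b_1(z)\ge b_2(z)$, and (b) for every $z'>z$ reached with no further arrivals, $b_1(z')\ge b_2(z')$. Part (b) is a look-ahead over all service-only futures, and it is what makes the arrival step go through. When a new batch arrives, the paper compares the realized future to the counterfactual future without that arrival (to which (b) applies), and then uses a per-server ``counter'' abstraction specific to the $r=n$ system: start a counter at each server with the number of in-system batches it has already served, increment on every clock firing, and declare a departure whenever any $k$ counters are positive (decrementing those $k$). The point is that in the $r=n$ system this counter process \emph{exactly} tracks departures, because FCFS with full replication forces batches to complete in arrival order; this lets the paper show that the $k$ extra firings spent on the new batch in $T_1$ account for at most one extra departure in $T_2$, restoring the inequality. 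This counter argument is doing the work that your single potential $\Phi$ cannot, because it retains per-server information rather than collapsing everything to a scalar.
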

Fig.~\ref{fig:penalty_zero} depicts simulations that corroborate this result. 

\begin{figure*}[t!]
\centering
\begin{minipage}{.46\textwidth}
\centering
\includegraphics[width=.9\textwidth]{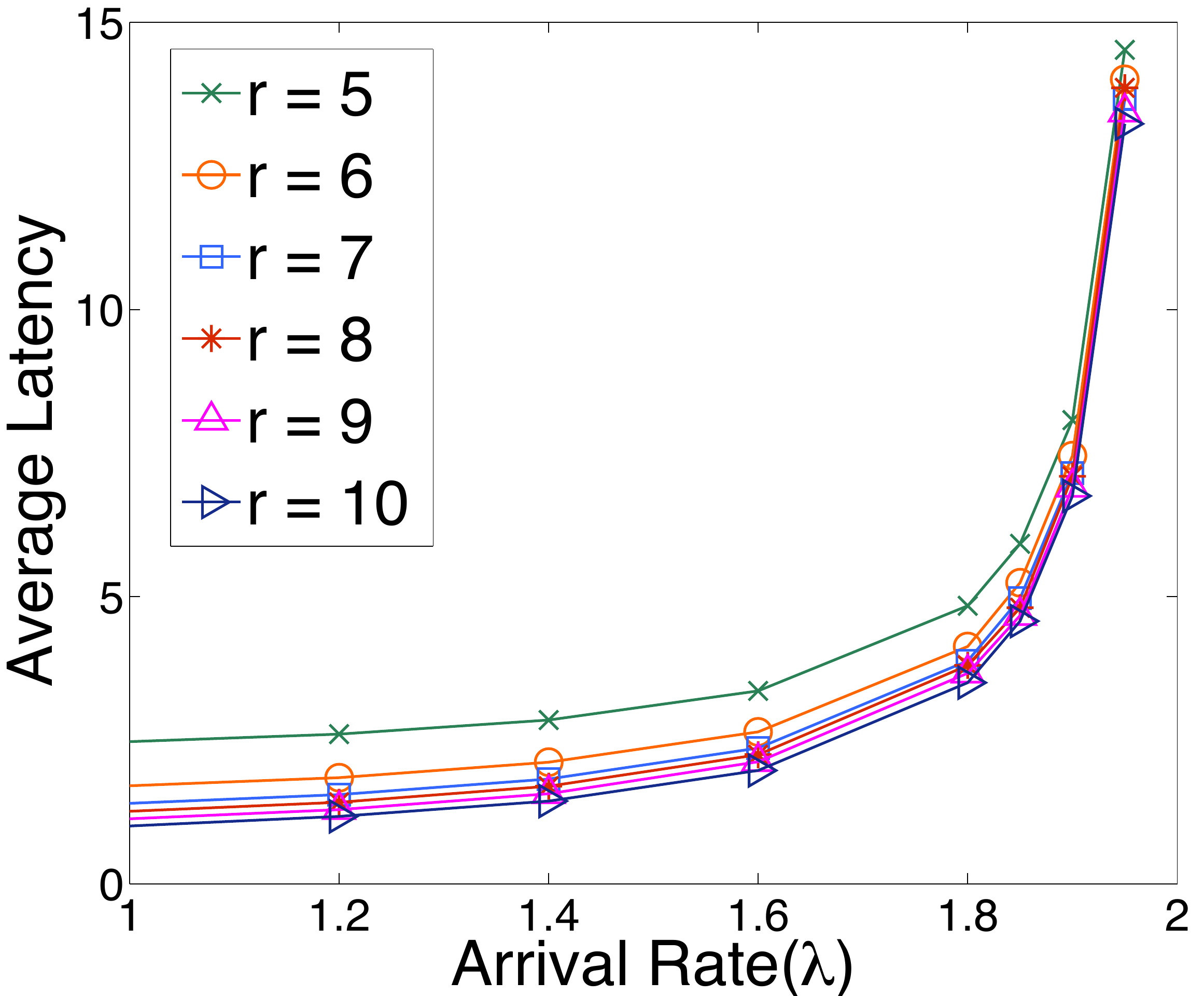}
\caption{Simulation results showing the  average latency for various values of the request-degree $\request$, in a $(n=10,\ k=5)$ system with i.i.d. memoryless service with rate $1$ and arrivals following a Poisson process with rate $\lambda$.}
\label{fig:penalty_zero}
\end{minipage}
~~
\begin{minipage}{.46\textwidth}
\centering
\includegraphics[width=.9\textwidth]{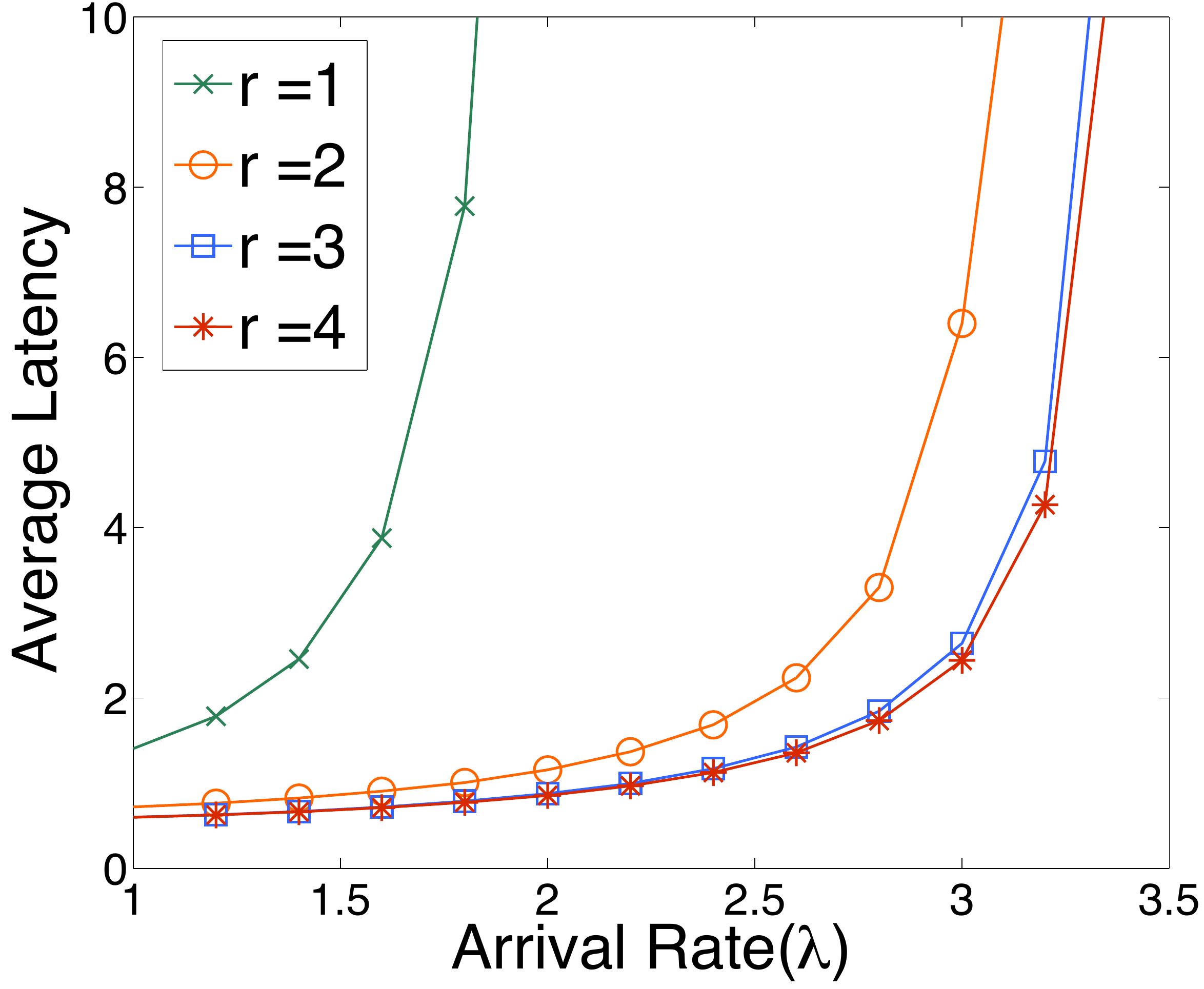}
\caption{Average latency in a $(n=4,\ k=1)$ system with a heavy-everywhere service time. The service time is distributed as a mixture of exponential distributions and the arrival process is Poisson with a rate $\lambda$.}
\label{fig:flood_hyper_exp}
\end{minipage}
\end{figure*}

\begin{figure*}[t!]
\medmuskip=0\medmuskip
\thinmuskip=0\thinmuskip
\thickmuskip=0\thickmuskip
\begin{minipage}{.483\textwidth}
\centering
\includegraphics[width=.9\textwidth]{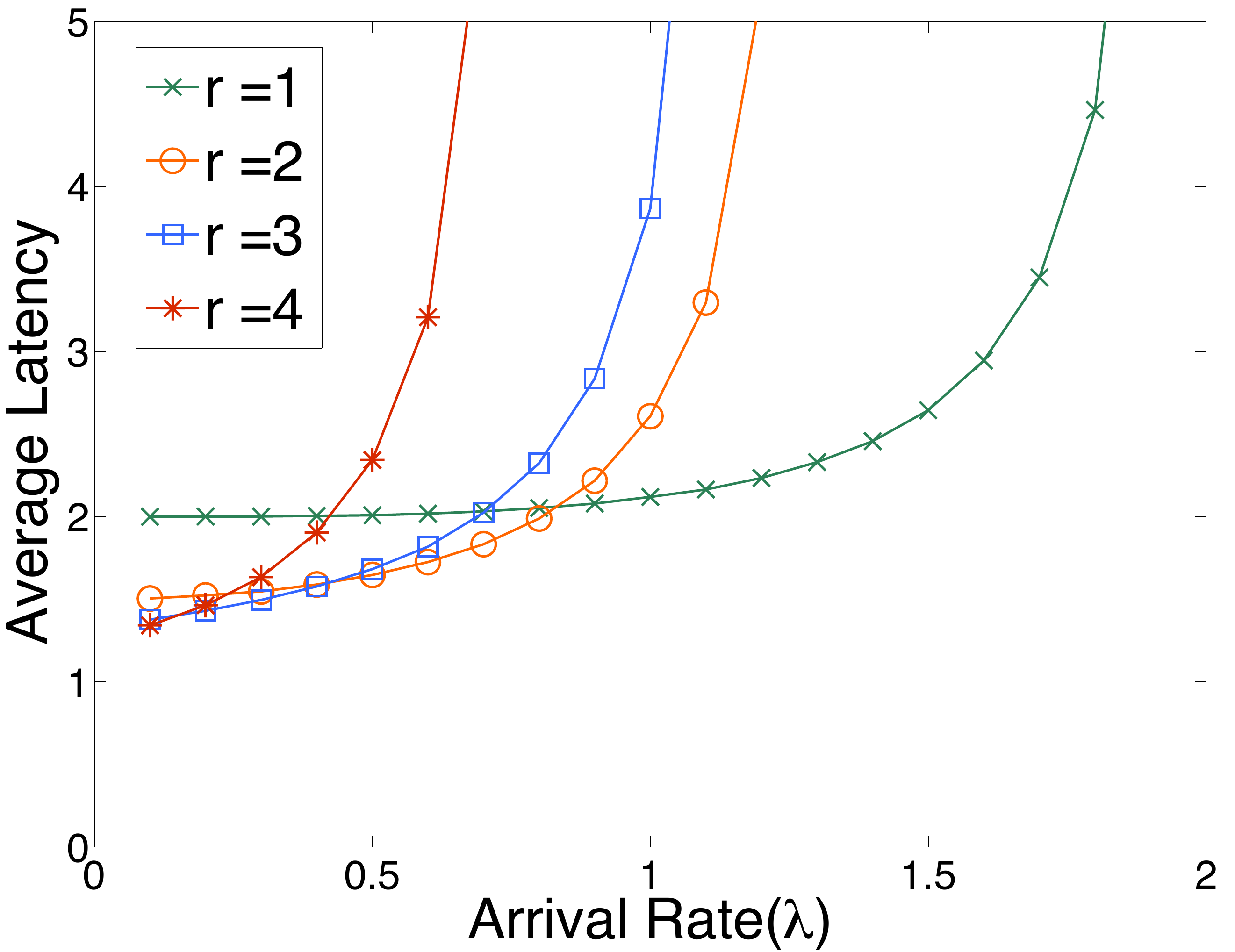}
\caption{Average latency in a $(n=4,\ k=1)$ system with a light-everywhere service time. The service time is distributed as an exponential distribution with rate $1$ shifted by a constant of value $1$ and the arrival process is Poisson with a rate $\lambda$.}
\label{fig:flood_const_plus_exp}
\end{minipage}
~~~
\begin{minipage}{.483\textwidth}
\centering
\includegraphics[width=.9\textwidth]{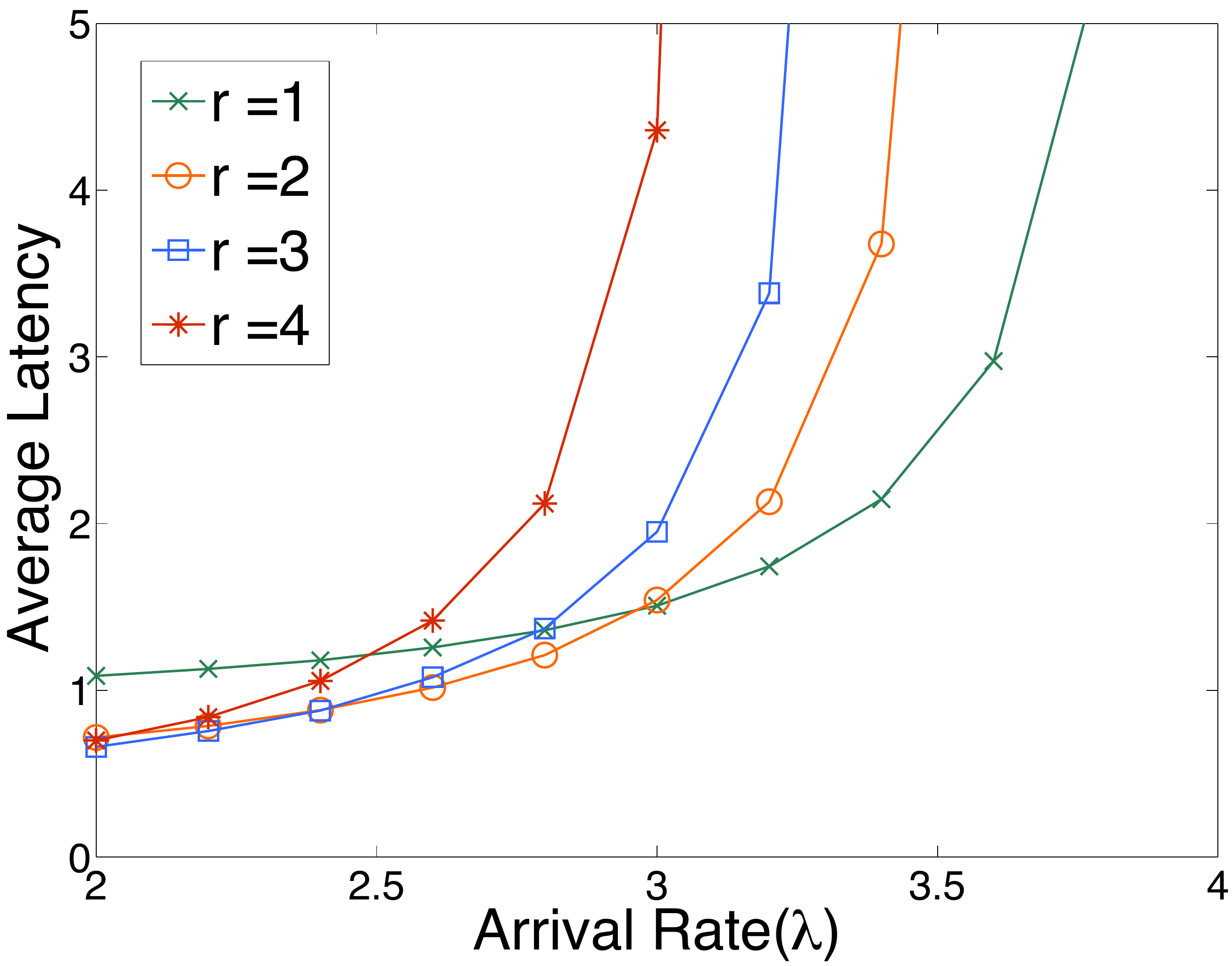}
\caption{Average latency in a $(n=4,\ k=1)$ system with the service time following an exponential distribution with rate $1$, wherein removing an unfinished job from a server requires the server to remain idle for a time distributed exponentially with rate $10$.}
\label{fig:penalty_10percent}
\end{minipage}
\end{figure*}

We now move on to some more general classes of service-time distributions. The first class of distributions is what we term \textit{heavy-everywhere}, defined as follows.
\begin{definition}[\textbf{Heavy-everywhere distribution}]
A distribution on the non-negative real numbers is termed heavy-everywhere if for every pair of values $a > 0$ and $b \geq 0$ with $P(X>b)>0$, the distribution satisfies 
\beq 
P(X>a+b\ |\ X>b) \  \geq \  P(X>a)~. \label{eq:heavier_definition}
\eeq 
\end{definition}
In words, under a heavy-everywhere distribution, the need to wait for a while makes it more likely that a bad event has occurred, thus increasing the possibility of a greater wait than usual.

For example, a mixture of independent exponential distributions satisfies~\eqref{eq:heavier_definition} and hence is heavy-everywhere. Some properties of heavy-everywhere distributions are discusses in Appendix~\ref{app:everywhere_dist}.

A second class of distributions is what we call \textit{light-everywhere} distributions, defined as follows.
\begin{definition}[\textbf{Light-everywhere distribution}]
A distribution on the non-negative real numbers is termed light-everywhere if for every pair of values $a > 0$ and $b \geq 0$ with $P(X>b)>0$, the distribution satisfies 
\beq 
P(X>a+b\ |\ X>b) \  \leq \  P(X>a)~. \label{eq:lighter_definition}
\eeq 
\end{definition}
In words, under a light-everywhere distribution, waiting for some time brings you closer to completion, resulting in a smaller additional waiting time.

For example, an exponential distribution that is shifted by a positive constant is light-everywhere, and so is the uniform distribution.  Some properties of light-everywhere distributions are discussed in Appendix~\ref{app:everywhere_dist}

\begin{figure*}[t!]
\centering
\subfloat[]{
\includegraphics[width=.18\textwidth]{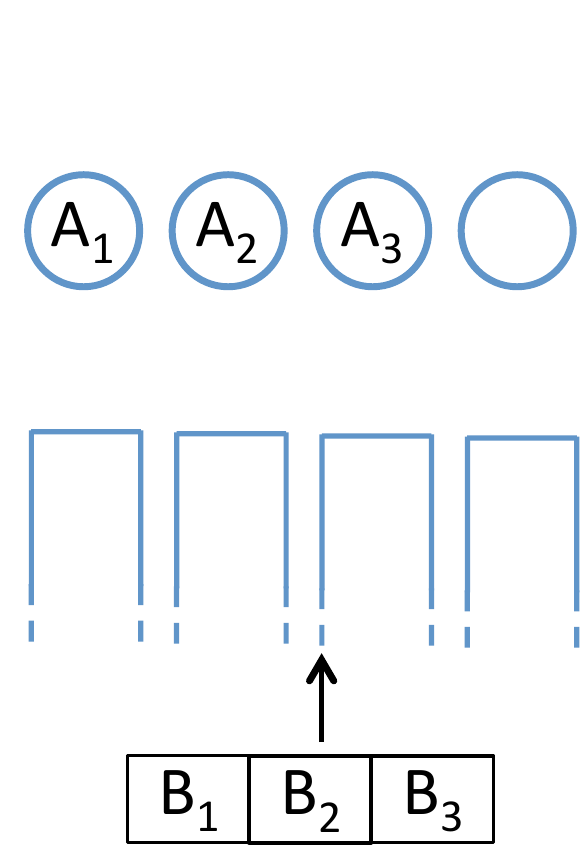}
\label{fig:distributed_scheduling_a}
}\hspace{\spaceBetweenScheduling}
\subfloat[]{
\includegraphics[width=.18\textwidth]{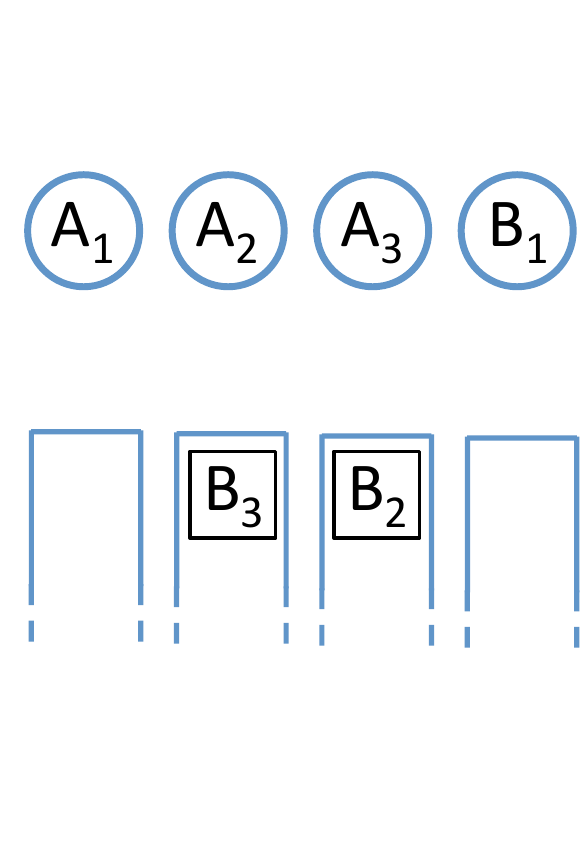}
\label{fig:distributed_scheduling_b}
}\hspace{\spaceBetweenScheduling}
\subfloat[]{
\includegraphics[width=.18\textwidth]{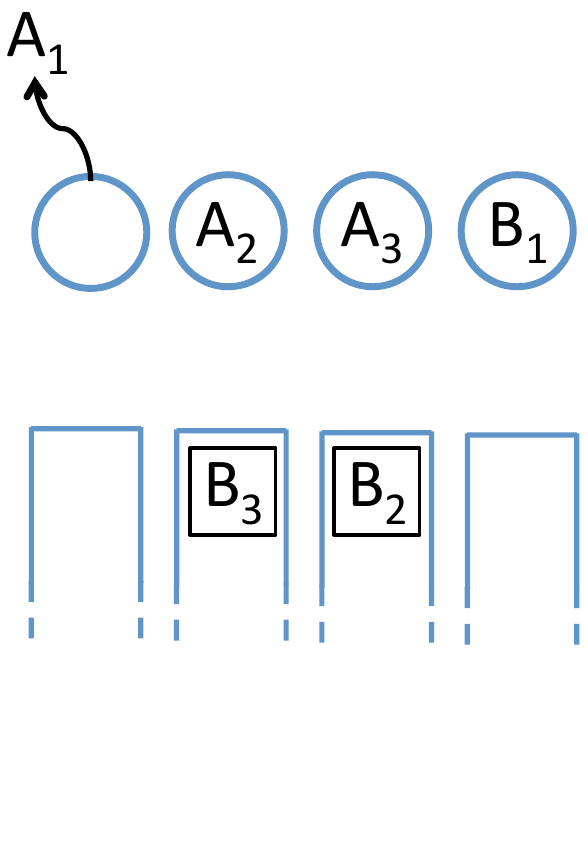}
\label{fig:distributed_scheduling_c}
}\hspace{\spaceBetweenScheduling}
\subfloat[]{
\includegraphics[width=.18\textwidth]{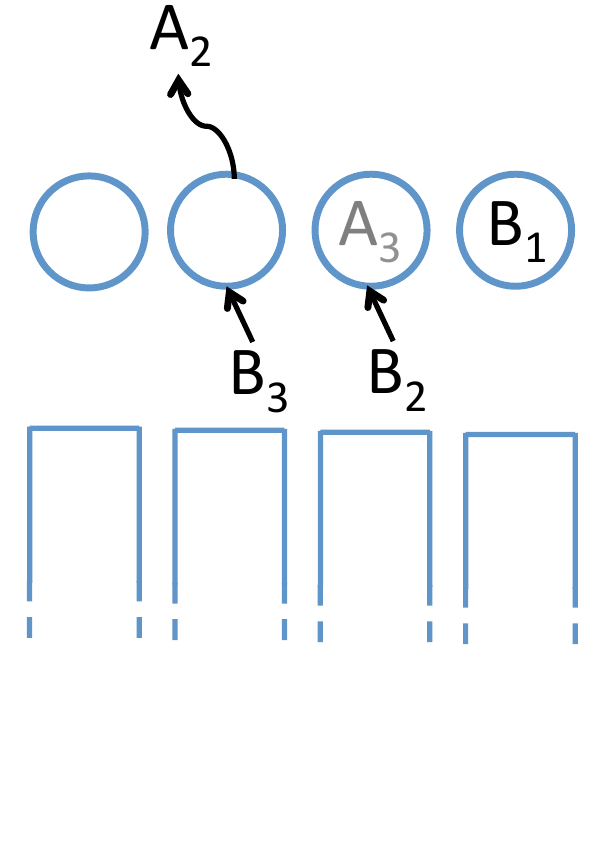}
\label{fig:distributed_scheduling_d}
}\caption{Illustration of the setting with distributed buffers for parameters $n=4$, $k=2$ and request-degree $\request=3$, as described in Example~\ref{ex:setting_distributed}.}
\label{fig:distributed_scheduling}
\end{figure*}
The following theorems present results for systems with service-times belonging to one of these two classes of distributions. 

\begin{theorem}[heavy-everywhere service, no removal cost, $k=1$, high load]\label{thm:heavier_flood}
Consider a system with $n$ servers such that any one server suffices to serve any request, the service-time is i.i.d. heavy-everywhere, and jobs can be removed instantly from the system. When the system has a 100\% server utilization, the average latency is minimized when each batch is sent to all $n$ servers, i.e., when $r=n$ for each batch.
\end{theorem}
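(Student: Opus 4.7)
The plan is to exploit the heavy-everywhere property to show that in any alternative policy the next server-completion epoch is stochastically longer than it is under $\request=n$, and then to convert this into an average-latency comparison via a sample-path coupling on a common arrival sequence.

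I would first record the particularly clean dynamics of the $\request=n$ system. Since $k=1$ and every batch is dispatched to all $n$ servers, the moment any one server completes the batch is done and, by the zero-removal-cost assumption, the remaining $n-1$ copies are discarded instantly. All $n$ servers then simultaneously start fresh service on the next batch in the buffer. Under 100\% utilization the servers thus remain in lock-step, and the inter-completion times form an i.i.d.\ sequence, each distributed as $T^\star := \min(X_1,\ldots,X_n)$, where $X_1,\ldots,X_n$ are i.i.d.\ samples from the service distribution.

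Next, I would fix any alternative policy $\mathcal{B}$ (with request-degrees possibly varying across batches) and show that each of its inter-completion epochs stochastically dominates $T^\star$. Because $k=1$, every server completion in $\mathcal{B}$ is a batch completion. At any point between two consecutive batch completions in $\mathcal{B}$, each server $i$ is serving some job with elapsed service time $b_i \geq 0$, and the time to the next completion is $\min_i X_i^{\text{res}}$, where $X_i^{\text{res}}$ is distributed as $X \mid X > b_i$. Independence of the $n$ service streams across servers together with the heavy-everywhere inequality give
\beq
P\bigl(\min_i X_i^{\text{res}} > a\bigr) \;=\; \prod_{i=1}^{n} P(X > a+b_i \mid X > b_i) \;\geq\; \prod_{i=1}^{n} P(X > a) \;=\; P(T^\star > a),
\eeq
so the next inter-completion epoch in $\mathcal{B}$ stochastically dominates $T^\star$, conditionally on any realisation of the history.

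Finally, I would upgrade this conditional stochastic dominance to an almost-sure pathwise comparison via the quantile coupling: driving the randomness of both systems by a common sequence of uniform random variables, one can jointly realise the inter-completion times $(S_j^{\mathcal{A}}, S_j^{\mathcal{B}})$ so that $S_j^{\mathcal{A}} \leq S_j^{\mathcal{B}}$ pointwise while $(S_j^{\mathcal{A}})$ remains an i.i.d.\ $T^\star$ sequence. Coupling arrivals to be identical, this yields that the $i\supth$ ordered departure epoch of $\mathcal{A}$ is almost surely no later than that of $\mathcal{B}$; since the sum of arrival times is common, the sum of latencies in $\mathcal{A}$ is almost surely no larger than in $\mathcal{B}$, which gives the claim. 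The main obstacle lies in this last step: batches in $\mathcal{B}$ can complete out of arrival order, so one cannot directly compare per-batch latencies; I would sidestep this by comparing the ordered departure sequences and invoking the quantile coupling, which works precisely because the bound $S_j^{\mathcal{B}} \geq_{st} T^\star$ holds conditionally on every history.
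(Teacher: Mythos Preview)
Your argument is correct and rests on the same key observation as the paper: under a heavy-everywhere service law, each residual $X\mid X>b_i$ stochastically dominates a fresh draw, so the time to the next completion under any policy dominates $T^\star=\min(X_1,\ldots,X_n)$. The execution, however, is organised differently. You keep both systems intact and pass from conditional stochastic dominance to a pathwise inequality via a quantile coupling on the inter-completion times; this works, but (as you note) it forces you to supply extra randomness to recover the identity of the completing server so that the coupled version of $\mathcal{B}$ retains the right law, and the final ``sum of latencies'' step is cleanest if routed through the number-in-system process and Little's law rather than ordered departures. The paper instead \emph{degrades} the $r=n$ system: whenever all $n$ servers of $T_2$ restart, each new service time is replaced by the residual of the corresponding server in $T_1$, which by heavy-everywhere is stochastically longer. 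This makes $T_2$ a slowed-down copy of $S_2$, but now the completion epochs of $T_1$ and $T_2$ coincide \emph{exactly} on every sample path, so the batch counts are equal and the chain $S_2\preceq T_2 = T_1 \stackrel{d}{=} S_1$ follows with no quantile machinery and no reordering issue. Both routes are valid; the paper's trades a slightly less transparent construction for sidestepping precisely the obstacle you identified.
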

This is corroborated in Fig.~\ref{fig:flood_hyper_exp} which depicts simulations with the service time $X$ distributed as a mixture of exponentials:
\[ X \sim \left\{ \begin{tabular}{cc} exp(\,rate\,=\,0.1) & w.p. 0.2\\ exp(\,rate\,=\,1) & w.p. 0.8~. \end{tabular}\right.~\]
Note that Theorem~\ref{thm:heavier_flood} addresses only the scenario of high loads and predicts minimization of latency when $\request=n$ in this regime; simulations of Fig.~\ref{fig:flood_hyper_exp} further seem to suggest that the policy of $\request=n$ minimizes the average latency for all loads. Similar phenomena are observed in simulations for $k>1$.

\begin{theorem}[light-everywhere service, any removal cost, $k=1$, high load]\label{thm:lighter_noflood}	
Consider a system with $n$ servers such that any one server suffices to serve any request, and the service-time is i.i.d. light-everywhere. When the system has a 100\% server utilization, the average latency is minimized when there is no redundancy in the requests, i.e., when $\request=k~(=1)$ for all batches.
\end{theorem}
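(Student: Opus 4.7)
The plan is to couple the system operating under an arbitrary redundant-requesting policy $\pi$ (that uses request-degree greater than one on at least one batch) with the no-redundancy baseline $\pi_0$ (request-degree one throughout), driven by the same arrival sequence and a shared pool of i.i.d.\ service-time samples. Under $100\%$ utilization, every server of both systems is busy at all times, so the total server-time consumed by time $T$ is exactly $nT$ in each. The proof then has three steps: (i) show that a batch served with redundancy consumes strictly more expected server-time than a batch served without redundancy; (ii) turn this per-batch statement into a pathwise dominance of cumulative completions; and (iii) convert that dominance into an average-latency inequality.

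The key lemma is that for any light-everywhere distribution and any integer $r\geq 1$,
\beq
r \cdot E\!\left[\min(X_1,\ldots,X_r)\right] \ \geq \ E[X], \label{eq:keyLem}
\eeq
with equality precisely when $X$ is exponential. Setting $h(t) = -\log P(X>t)$, the light-everywhere inequality $P(X>a+b)\leq P(X>a)P(X>b)$ is exactly super-additivity $h(a+b)\geq h(a)+h(b)$ with $h(0)=0$, which iterates to $h(rt)\geq r\,h(t)$ for all $t\geq 0$. Applying the change of variables $t=ru$ then gives
\[
E[X] \ = \ \int_0^\infty e^{-h(t)}\,dt \ = \ r\!\int_0^\infty e^{-h(ru)}\,du \ \leq \ r\!\int_0^\infty e^{-r\,h(u)}\,du \ = \ r\,E[\min(X_1,\ldots,X_r)].
\]
Thus, in expectation, every batch of $\pi$ served with redundancy consumes at least as much server-time as one batch of $\pi_0$; any strictly positive removal cost only strengthens this comparison by adding idle server-time under $\pi$.

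Using~\eqref{eq:keyLem} together with the strong law of large numbers applied to the shared i.i.d.\ service samples, the cumulative batch-completion counts $N_\pi(T)$ and $N_{\pi_0}(T)$ satisfy $N_{\pi_0}(T) \geq N_\pi(T)$ almost surely for all sufficiently large $T$, since the long-run server-time per batch is $E[X]$ in $\pi_0$ and at least $E[X]$ in $\pi$ while the total server-time is $nT$ in both. Letting $U_\pi(s) = A(s)-N_\pi(s)$ for the common arrival process $A$, the cumulative time-in-system of all batches entering by time $T$ equals $\int_0^T U_\pi(s)\,ds$ up to a boundary correction of order one. The FCFS, work-conserving scheduling of Algorithm~\ref{alg:redundant_requests} and the independence of arrivals from the system state permit a sample-path form of Little's identity, so the dominance $U_{\pi_0}\leq U_\pi$ (eventually) translates into the claimed average-latency bound.

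The main obstacle is the throughput-to-latency translation for arbitrary (possibly non-stationary) arrival sequences. If the arrival rate lies strictly between $n/(r\,E[\min])$ and $n/E[X]$, then $\pi$ is unstable while $\pi_0$ is stable and the conclusion is trivial (infinite vs.\ finite average latency). In the critical regime where the arrival rate exactly matches $n/E[X]$, one must use the pathwise LLN-based dominance of completion counts together with the work-conserving scheduling to show that the backlog integral under $\pi$ exceeds that under $\pi_0$ by a macroscopic amount; this is where careful bookkeeping of the service-time contributions per batch is needed, and the light-everywhere property is used in exactly this step to ensure the gap $r\,E[\min]-E[X]$ is strictly positive.
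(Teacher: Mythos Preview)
Your key lemma \eqref{eq:keyLem} is correct and its proof via super-additivity of $-\log P(X>t)$ is clean; in fact it is Proposition~\ref{prop:heavy_light_mean} of the paper. But the application of the lemma to the queueing system has a real gap. You equate the server-time consumed by a batch of request-degree $r$ with $r\cdot\min(X_1,\ldots,X_r)$, which presumes that all $r$ jobs of the batch begin service simultaneously. Under Algorithm~\ref{alg:redundant_requests} with a centralized buffer this is generally false: servers free up one at a time, so the $r$ jobs of a batch may start at staggered instants (this happens whenever $n$ is not a multiple of $r$, or whenever the policy assigns different request-degrees to different batches, which the theorem explicitly permits). With staggered starts the server-time consumed by a batch is neither $r\cdot\min(X_1,\ldots,X_r)$ nor any simple function of the $X_i$'s alone, and your lemma no longer bounds it from below by $E[X]$. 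A secondary gap is step (ii): an LLN argument yields only $N_{\pi_0}(T)\geq N_\pi(T)$ asymptotically, whereas the theorem is stated for arbitrary arrival processes (not just stationary-ergodic ones) and one needs a pathwise comparison valid at every time to conclude the latency ordering; your final paragraph acknowledges but does not resolve this.

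The paper's proof avoids both issues by a direct sample-path coupling rather than a throughput computation. It fixes a bijection between the servers of the two systems and, whenever a server in the redundant system $T_1$ starts a fresh job, assigns it the \emph{residual} service time of the job currently in the corresponding server of the no-redundancy system $T_2$. The light-everywhere property says this residual is stochastically no larger than a fresh draw, so $T_1$ is only being sped up relative to $S_1$. Since $T_2$ has $r=1$, a server there starts a new job only upon its own completion, and at that instant the coupled server in $T_1$ also has a completion; an induction then shows the number of batches in $T_2$ never exceeds that in $T_1$ at any time. This gives a pathwise dominance that holds for every arrival sequence, without any accounting of per-batch server-time and without appealing to the LLN.
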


This is corroborated in Fig.~\ref{fig:flood_const_plus_exp} which depicts simulations with the service time $X$ distributed as a sum of a constant and a value drawn from an exponential distribution:
\[
P(X>x) = \begin{cases}e^{-(x-1)} & \textrm{if } x \geq 1\\
1 &\textrm{otherwise}~.
\end{cases}
\]
We observe in Fig.~\ref{fig:flood_const_plus_exp} that at high loads, the absence of any redundant requests (i.e., $r=1$) minimizes the average latency, which is as predicted by the theory. We also observe in the simulations for this setting that redundant requests do help when arrival rates are low, but start hurting beyond a certain threshold on the arrival rate. Similar phenomena are observed in simulations for $k>1$.

The next theorem revisits memoryless service times, but under non-negligible removal costs.
\begin{theorem}[memoryless service, non-zero removal cost, $k=1$, high load]\label{thm:removalCosts_memless}	
Consider a system with $n$ servers such that any one suffices server to serve any request, and the service-time is i.i.d. memoryless, and removal of a job from a server incurs a non-zero delay. When the system has a 100\% server utilization, the average latency is minimized when there is no redundancy in the requests, i.e., when $\request=k~(=1)$ for all batches.
\end{theorem}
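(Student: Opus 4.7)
The plan is to use a capacity-accounting argument on coupled sample paths, comparing an arbitrary redundant-requesting policy $\Pi$ against the no-redundancy baseline $\Pi_0$ that sets $\request=1$ for every batch. The intuition is that at $100\%$ server utilization every unit of available server-time under $\Pi_0$ is spent processing a distinct request, whereas any policy that ever runs more than one copy of a batch in parallel loses strictly positive expected server-time to removal-induced idleness; a Loynes-type stability argument then shows that this excess demand destabilises the queue and sends the average latency to infinity.

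First I would couple $\Pi$ and $\Pi_0$ to a common arrival sequence and, exploiting the memoryless property of the service distribution, to a shared family of rate-$\mu$ exponential clocks indexed by (server, job) that advance only while the associated server is actively serving the associated job; this reduces the comparison to an accounting of how each server's time is spent. Let $\rho>0$ denote the mean removal delay.

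Second I would compute the per-batch server-time consumption under $\Pi$. Letting $m(t)$ denote the number of copies of a tagged batch that are actively in service at time $t$ and $T$ the instant the batch is deemed complete, the compensated Poisson martingale associated with the aggregate completion rate $m(t)\mu$ together with optional stopping at $T$ yields expected aggregate \emph{active} service time per batch equal to $1/\mu$, independently of how the $\request$ copies are staggered in time. On top of this, the $m(T)-1$ copies that are still running at completion each have, by memorylessness, a positive residual service time and hence almost surely trigger a genuine removal costing an expected $\rho$ of idleness to the holding server. Thus the expected total server-time consumed by a batch is $1/\mu+(\mathbb{E}[m(T)]-1)\rho$, which exceeds $1/\mu$ whenever $\Pi$ ever runs more than one copy of that batch in parallel. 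Since the aggregate server-time available on $[0,T]$ is $nT$, and at $100\%$ utilization the arrival rate $\lambda$ satisfies $\lambda/\mu=n$, the baseline $\Pi_0$ saturates capacity exactly; if $\Pi$ runs multiple copies in parallel on a positive long-run fraction of batches, the expected demand on $[0,T]$ exceeds $nT$, the buffer grows without bound, and the average latency under $\Pi$ diverges, whereas $\Pi_0$ remains at worst marginally stable. Hence $\Pi_0$ is optimal.

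The main obstacle is making the per-batch accounting robust against adaptive policies that might hold back redundant copies until an already-dispatched copy has nearly completed, thereby minimising removal events. To address this I would invoke a dichotomy: any policy that never actually runs $\request>1$ copies of any batch concurrently is, under our coupling, pathwise equivalent in distribution to $\Pi_0$ by memorylessness and therefore cannot beat it; any policy that does run multiple copies concurrently on a positive-density subset of batches contributes strictly positive expected removal cost each time, which under $100\%$ utilization drags long-run throughput below $\lambda$. Together these two cases exhaust all policies and complete the comparison.
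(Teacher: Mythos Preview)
Your capacity-accounting intuition is sound --- the martingale step showing that the expected aggregate active service time per batch is $1/\mu$ regardless of how copies are staggered is correct, and the observation that every copy still running at completion contributes expected idleness $\rho>0$ is exactly the mechanism at work. But the argument has a genuine gap at the point where you convert the throughput deficit into the latency conclusion.

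The paper allows an \emph{arbitrary} arrival process and reads ``100\% server utilization'' as a sample-path condition (the buffer never empties, so every server is always occupied), not as the rate identity $\lambda=n\mu$. Your Loynes-type step presupposes a stationary arrival rate, which the theorem does not grant. Even if one did fix $\lambda=n\mu$, the baseline $\Pi_0$ is then only critically loaded, and ``at worst marginally stable'' typically still means infinite average latency; so the comparison degenerates to $\infty\le\infty$ and you have not separated $\Pi_0$ from $\Pi$. What you have actually shown is that $\Pi$'s backlog drifts upward linearly while $\Pi_0$'s does not --- true and interesting, but weaker than the statement to be proved and not valid for arbitrary arrivals.

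The paper instead reuses the sample-path coupling of the light-everywhere theorem. One compares $S_1$ (the arbitrary policy, with removal cost) to $S_2$ (the $r=1$ policy) through an intermediate system that is $S_1$ with the removal cost set to zero; dropping removal cost can only help, and for memoryless service the light-everywhere coupling holds with equality, so the zero-removal-cost version of $S_1$ carries exactly as many batches as $S_2$ at every instant. Reinstating the removal cost then places $S_1$ at or above this, yielding pathwise dominance of the queue length for every arrival sequence. Your coupling of clocks is close to what is needed; the fix is to argue per clock-tick (each tick completes a batch in $\Pi_0$, but only completes one in $\Pi$ when the corresponding server is not in removal-idleness) rather than per batch, which gives the pathwise inequality directly and avoids the stability detour altogether.
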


Fig.~\ref{fig:penalty_10percent} presents simulation results for such a setting. The figure shows that under this setting, redundant requests lead to a higher latency at high loads, as predicted by theory.

\section{System Model: Distributed Buffers}\label{sec:model_distributed}
The model with distributed buffers closely resembles the case of a centralized buffer. The only difference is that in this distributed setting, each server has a buffer of its own, and the jobs of a batch must be sent to some $\request$ of the $n$ buffers \textit{as soon as the batch arrives in the system}. The protocol for choosing these $\request$ servers for each batch may be arbitrary for the purposes of this paper, but for concreteness, the reader may assume that the $\request$ least-loaded buffers are chosen. The setting with distributed buffers is illustrated in the following example.
~\\
\begin{example}\label{ex:setting_distributed}
{\it
Fig.~\ref{fig:distributed_scheduling} illustrates the system model and the working of the system in the distributed setting, for parameters $n=4$, $k=2$ and $\request=3$. The system has $n=4$ servers, and each of these servers has its own buffer, as shown in Fig.~\ref{fig:distributed_scheduling_a}. Denote the four servers (from left to right) as servers $1$, $2$, $3$ and $4$. Fig.~\ref{fig:distributed_scheduling_a} depicts a scenario wherein batch $A$ is already being served by the first three servers, and batch $B$ just arrives. The three servers (buffers) to which batch $B$ will be sent to must be selected at this time. Suppose the algorithm chooses to send the batch to buffers $2$, $3$ and $4$ (Fig.~\ref{fig:distributed_scheduling_b}). Now suppose server $1$ completes service of job $A_1$ (Fig.~\ref{fig:distributed_scheduling_c}). Since there is no job waiting in the first buffer, server $1$ remains idle. Note that in contrast, a centralized setting would have allowed the first server to start processing either job $B_2$ or $B_3$. Next, suppose server $2$ completes service of job $A_2$ (Fig.~\ref{fig:distributed_scheduling_d}). With this, $k=2$ jobs of batch $A$ are served, and the third job $A_3$ is thus removed. Servers $2$ and $3$ can now start serving jobs $B_3$ and $B_2$ respectively.
}
\end{example}

\section{Analytical Results for the Distributed Buffers Setting}\label{sec:analytical_distributed}
As in the centralized setting of Section~\ref{sec:analytical}, we continue to assume that the service-time distributions of jobs are i.i.d. and the system operates on a first-come-first-served basis. The following theorems prove results that are distributed counterparts of the results of Section~\ref{sec:analytical}.

\begin{theorem}[memoryless service, no removal cost, general $k$]\label{thm:memoryless_general_k_distributed}
Consider a system with $n$ servers such that any $k$ of them can serve a request, the service-time is i.i.d. memoryless, and jobs can be removed instantly from the system. The average latency is minimized when all batches are sent to all the servers, i.e., when $\request=n$ for every batch.
\end{theorem}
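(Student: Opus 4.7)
The plan is to deduce the distributed result from its centralized analog (Theorem~\ref{thm:memoryless_general_k}) via a pathwise equivalence and a ``centralization-helps'' comparison. First I would observe that under $r=n$ in the distributed setting, every arriving batch enters every server's buffer at the instant of arrival; consequently at each time $t$ and each server $s$, the contents of $s$'s buffer equal exactly the set of batches currently in the system that $s$ has not yet served, and FCFS within this buffer makes $s$ serve the oldest such batch---the same choice that Algorithm~\ref{alg:redundant_requests} makes under centralized $r=n$. Coupling arrivals and the per-server Poisson service processes, the two systems make identical scheduling decisions and experience identical completion events, so distributed $r=n$ and centralized $r=n$ have identical per-batch latency distributions.

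Next, for an arbitrary distributed policy $\Pi_D$ with request-degrees $\{r_i\}$ and server assignments $\{S_i\}$, I would show that the latency of distributed $\Pi_D$ is at least that of centralized $\{r_i\}$ under Algorithm~\ref{alg:redundant_requests}. The intuition is the classical queueing-theoretic ``pooled dominates parallel'' principle: an idle server in the centralized system may serve any pending batch it has not yet processed, whereas in the distributed system server $s$ is restricted to batches in $\{i : s \in S_i\}$, so distributed servers can sit idle while batches queue elsewhere. Memorylessness permits a coupling of the per-server Poisson service processes under which one shows, by induction over events (arrivals and service ticks), that the number of batches in the centralized system is pathwise no larger than in the distributed system; Little's law then gives the same ordering for the average latency. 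Chaining this with Theorem~\ref{thm:memoryless_general_k} (centralized $r=n$ has latency at most centralized $\{r_i\}$) and the pathwise equivalence above yields that distributed $r=n$ has latency at most that of distributed $\Pi_D$, giving the theorem.

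The main technical obstacle is the coupling argument in the second step. Two subtleties need care: the identities of batches being served by a given server at a given instant differ between the two coupled systems, so after each event I must re-pair the outstanding service completions, which is legitimate by memorylessness; and under $r=n$ a single completion can trigger the simultaneous removal of partial jobs at several servers (the cascade when the $k$-th copy of a batch finishes), which the coupling must absorb without breaking the dominance. I plan to handle both by taking the state to be the multiset of remaining completions $\{k-m_i\}$ across live batches, and verifying that a suitable majorization-style order on this multiset is preserved by each arrival and by each Poisson service tick.
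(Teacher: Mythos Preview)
Your proposal is correct and follows essentially the same three-step structure as the paper's proof: (i) under $r=n$ the distributed and centralized systems coincide, (ii) for any fixed redundant-requesting policy the centralized system dominates the distributed one, and (iii) Theorem~\ref{thm:memoryless_general_k} gives optimality of $r=n$ in the centralized case, so chaining yields the result. The only difference is one of detail: the paper dispatches step (ii) in a single sentence (the centralized scheduler ``has more information'' and FCFS is optimal), whereas you outline an explicit pathwise coupling with a majorization-type invariant; your elaboration is sound but arguably more machinery than the paper deems necessary, since a direct ``centralized can mimic any distributed assignment'' argument already suffices under memorylessness.
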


\begin{theorem}[heavy-everywhere service, no removal cost, $k=1$, high load]\label{thm:heavy_distributed}
Consider a system with $n$ servers such that any one server suffices to serve any request, the service-time is i.i.d. heavy-everywhere, and jobs can be removed instantly from the system. When the system has a 100\% server utilization, the average latency is minimized when each batch is sent to all $n$ servers, i.e., when $r=n$ for each batch.
\end{theorem}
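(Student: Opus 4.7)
The plan is to reduce Theorem~\ref{thm:heavy_distributed} to its centralized counterpart Theorem~\ref{thm:heavier_flood}. The argument rests on two structural observations about the distributed scheduling rule.

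First, I would observe that the distributed policy $r = n$ is dynamically identical to the centralized policy $r = n$. Under $r = n$ every arriving batch is appended to every one of the $n$ buffers, so at every moment all $n$ buffers hold the same batches in the same FCFS order. Consequently every server is always working on a copy of the same head-of-line batch; the first server to complete its copy declares the batch served ($k=1$), the other $n-1$ copies are removed instantaneously (zero removal cost), and all servers move in unison to the next batch. This is precisely the dynamics produced by Algorithm~\ref{alg:redundant_requests} in the centralized $r = n$ setting, so under any common coupling of arrival times and per-server iid service-time sequences the two systems generate sample-path-identical completion times, and therefore the same average latency.

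Second, every distributed policy $\pi_D$ with arbitrary request degrees $\{r_j\}$ and arbitrary routings $\{S_j\}$ can be emulated by a centralized scheduling policy $\pi_C$: tag each of the $r_j$ copies of batch $j$ with its target server $s \in S_j$, and have server $s$ pick off the copies labeled for it in FCFS order from the central buffer. Under a common coupling of arrivals and per-server service times, each server processes the same sequence of batches at the same times in $\pi_D$ and in $\pi_C$, so the two systems have identical latency. Since $\pi_C$ is a feasible centralized policy, applying Theorem~\ref{thm:heavier_flood} (in the form that centralized $r=n$ minimizes average latency over the class of all centralized policies at $100\%$ server utilization) yields
\[
L_{\mathrm{dist},\,r=n} \;=\; L_{\mathrm{cent},\,r=n} \;\leq\; L(\pi_C) \;=\; L(\pi_D),
\]
which is the desired conclusion.

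The main obstacle is that Theorem~\ref{thm:heavier_flood} is stated under the convention that the centralized scheduler is Algorithm~\ref{alg:redundant_requests}, while the emulator $\pi_C$ is a distinct, generally less efficient, scheduler. To close this gap I would revisit the sample-path coupling that underlies Theorem~\ref{thm:heavier_flood} and verify that it really establishes the following stronger statement: at $100\%$ utilization the heavy-everywhere hypothesis forces the head-of-line service time of the $r = n$ policy to be stochastically smaller than that of any policy which serves the head-of-line batch on strictly fewer than $n$ servers in parallel. Because that stochastic comparison only cares about how many fresh copies are racing on the head-of-line batch, it is insensitive to the precise dispatching rule used by the alternative policy and therefore subsumes the emulator $\pi_C$, completing the reduction.
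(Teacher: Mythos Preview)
Your approach is correct in spirit but takes a detour that the paper avoids. The paper's proof is a single sentence: ``Identical to the proof of Theorem~\ref{thm:heavier_flood}.'' The reason this works is precisely the insight you arrive at in your final paragraph: the coupling argument in the proof of Theorem~\ref{thm:heavier_flood} uses nothing about the buffer structure of the comparison system $S_1$. All it uses is that (i) every server in $S_1$ is always busy (the $100\%$-utilization assumption), (ii) every service completion in $S_1$ retires exactly one batch (the $k=1$ assumption), and (iii) the $r=n$ system $S_2$ has all $n$ servers restart together at every completion (which, as you correctly note in your first step, makes the centralized and distributed $r=n$ systems sample-path identical). The heavy-everywhere slowdown of $T_2$ then forces the per-server completion epochs in $T_1$ and $T_2$ to coincide. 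None of this cares whether $S_1$'s jobs were routed by a central dispatcher following Algorithm~\ref{alg:redundant_requests} or were pre-assigned to local buffers at arrival.

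Your emulation step is therefore unnecessary: instead of building a centralized surrogate $\pi_C$ of the distributed policy $\pi_D$ and then worrying that Theorem~\ref{thm:heavier_flood} is stated only for Algorithm~\ref{alg:redundant_requests}, you can run the coupling directly with $S_1 = \pi_D$. What you describe as ``revisiting the sample-path coupling'' to verify it is ``insensitive to the precise dispatching rule'' \emph{is} the paper's proof; you have just reached it through the indirection $\pi_D \mapsto \pi_C$. Two minor points on your phrasing of the ``stronger statement'': the coupling equates \emph{per-server} completion epochs rather than a head-of-line service quantity, and the argument does not require the alternative policy to run strictly fewer than $n$ copies on the head-of-line batch (indeed your emulator $\pi_C$ may well use $n$ copies on some batches).
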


\begin{theorem}[light-everywhere service, any removal cost, $k=1$, high load]\label{thm:lighter_noflood_distributed}	
Consider a system with $n$ servers such that any one server suffices to serve any request, and the service-time is i.i.d. light-everywhere. When the system has a 100\% server utilization, the average latency is minimized when there is no redundancy in the requests, i.e., when $\request=k~(=1)$ for all batches.
\end{theorem}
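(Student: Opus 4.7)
The plan is to compare the $r=1$ and $r>1$ policies by examining the throughput each can sustain at $100\%$ server utilization. The key analytic ingredient is the workload inequality $r\,E[\min(X_1,\dots,X_r)]\ge E[X]$ for any light-everywhere distribution, with strict inequality whenever $X$ is not memoryless. This quantifies the extra aggregate server-time a redundancy policy must spend per completed batch, and at $100\%$ utilization this extra cost forces the $r>1$ system below the throughput that the arrival process demands.

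First I would derive the inequality. The condition~\eqref{eq:lighter_definition} is equivalent to sub-multiplicativity of the survival function $\bar F(t):=P(X>t)$, namely $\bar F(a+b)\le\bar F(a)\bar F(b)$. Iterating yields $\bar F(rt)\le\bar F(t)^r$ for every integer $r\ge 1$, and a change of variables gives
\[
r\,E\bigl[\min(X_1,\dots,X_r)\bigr]\;=\;r\!\int_0^\infty\!\bar F(t)^r\,dt\;\ge\;r\!\int_0^\infty\!\bar F(rt)\,dt\;=\;E[X],
\]
strict for any strictly light-everywhere distribution and every $r\ge 2$.

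Next I would convert this into a throughput bound in the distributed setting. At $100\%$ utilization every server is continuously busy on $[0,t]$, so the cumulative server-time expended equals $nt$ (plus a non-negative contribution from any removal costs, which can only further penalize the $r>1$ policy). Under $r=1$ each completed batch consumes exactly one service time on one server, so the batch-completion rate is $n/E[X]$, precisely the arrival rate that sustains $100\%$ utilization. Under any $r>1$ policy, the $r$ copies of a batch collectively occupy their servers until the first copy completes, and the inequality above shows that the expected aggregate server-time per completed batch is at least $E[X]$, strictly so for strictly light-everywhere distributions. Dividing the $nt$ server-seconds by this per-batch budget bounds the sustainable batch throughput strictly below $n/E[X]$, so the arrival rate driving $r=1$ to $100\%$ utilization overloads the $r>1$ system and forces its backlog, hence its average latency, to diverge.

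The main obstacle I expect to require most of the technical care is rigorously establishing the per-batch workload lower bound in the distributed setting, because the $r$ copies of a batch need not start service simultaneously---they are pinned to the chosen buffers at arrival time and processed FIFO---and a copy may be killed either mid-service or while still waiting in its buffer. I plan to handle this by a steady-state rate-conservation argument combined with the residual form of the light-everywhere condition, $P(X>a+u\mid X>u)\le P(X>a)$, which controls the expected remaining service time of any copy at the instant it is killed; sub-multiplicativity of $\bar F$ then reassembles the copy-by-copy residual bounds into the required lower bound on the long-run average server-time per completed batch, uniformly over dispatching policies and over any non-negative removal cost. This completes the throughput inequality and therefore the theorem.
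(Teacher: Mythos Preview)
Your approach is quite different from the paper's, which does not use a throughput argument at all. The paper proves Theorem~\ref{thm:lighter_noflood_distributed} by the same sample-path coupling as Theorem~\ref{thm:lighter_noflood}: fix a one-to-one correspondence between the servers of the $r=1$ system $T_2$ and those of the arbitrary-policy system $T_1$, and whenever a server in $T_1$ begins a fresh job, couple its service time to the \emph{residual} service time of the paired server in $T_2$. The light-everywhere inequality $P(X>x+t\mid X>t)\le P(X>x)$ makes that residual stochastically no larger than a fresh draw, so this modification can only help $T_1$ relative to $S_1$; an induction over events then yields the pointwise backlog domination $b_2(t)\le b_1(t)$ for every $t$. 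The light-everywhere condition thus enters one residual at a time, not through the aggregate inequality $r\,E[\min_i X_i]\ge E[X]$ (which is Proposition~\ref{prop:heavy_light_mean} in the paper but is never invoked in the proof).

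Your route has two genuine gaps. First, the workload inequality applies only when all $r$ copies start service simultaneously; in the distributed model the copies are staggered and some may be cancelled while still waiting in their buffers, consuming zero server time. The crude accounting---the winning copy consumes one full $X$, the others something nonnegative---gives only throughput $\le n/E[X]$, never strict inequality, and to recover strictness you must show that a positive fraction of redundant copies are killed mid-service rather than in buffer. Your ``residual rate-conservation'' sketch does not do this, and for $r<n$ with asymmetric dispatch it is far from obvious. Second, and more structurally, even a strict throughput gap would not establish the theorem as the paper states and proves it. At $100\%$ utilization the $r=1$ system is itself critically or super-critically loaded, so both average latencies may well be infinite; what the paper actually delivers is the sample-path inequality $b_2(t)\le b_1(t)$ for all $t$ under an arbitrary arrival sequence, which remains meaningful in that regime and which a throughput comparison cannot give you.
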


\begin{theorem}[memoryless service, non-zero removal cost, $k=1$, high load]\label{thm:removalCosts_memless_distributed}	
Consider a system with $n$ servers such that any one suffices server to serve any request, and the service-time is i.i.d. memoryless, and removal of a job from a server incurs a non-zero delay. When the system has a 100\% server utilization, the average latency is minimized when there is no redundancy in the requests, i.e., when $\request=k~(=1)$ for all batches.
\end{theorem}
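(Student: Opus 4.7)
The plan is to prove the theorem by showing that for any $\request > 1$, the system is unstable under 100\% server utilization, and hence has infinite average latency, whereas $\request = k = 1$ is stable by the paper's standing assumption. The overall structure mirrors the centralized counterpart (Theorem~\ref{thm:removalCosts_memless}), with extra care needed because in the distributed setting each copy must be committed to a specific buffer at arrival and cannot be migrated later.

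The main device is a per-server work-accounting inequality. Fix any $\request > 1$ and a long time window $[0,T]$. At each server $s$, busy time decomposes into three components: time spent actively serving jobs that eventually complete at $s$, time spent serving jobs that are eventually removed while in service at $s$, and removal-induced idle time. By the memoryless property, the expected time spent on the $N_s(T)$ completed jobs at $s$ is $N_s(T)/\mu$; the expected removal-idle time at $s$ is $R_s(T)\bar{D}$, where $R_s(T)$ is the number of in-service removals at $s$ and $\bar{D} > 0$ is the expected removal cost. Because 100\% utilization forces the pure-idle time to be negligible, summing across servers yields
\begin{equation}
\frac{1}{\mu}\sum_{s=1}^n N_s(T) \ + \ \bar{D}\sum_{s=1}^n R_s(T) \ \leq \ nT + o(T).
\end{equation}
Since $k=1$, every batch completion coincides with exactly one copy completion, so (asymptotically) $\sum_s N_s(T)$ equals $C(T)$, the number of batches completed. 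Hence $C(T)/T \leq n\mu - \mu\bar{D}\,\rho(T)$, where $\rho(T) = \sum_s R_s(T)/T$ is the aggregate in-service-removal rate.

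The main obstacle is showing that $\rho := \liminf_{T\to\infty}\rho(T) > 0$ whenever $\request > 1$. The intuition is that under 100\% utilization every buffer is perpetually non-empty, so the FIFO rule forces the $\request$ copies of any batch to each wait through a comparable amount of accumulated work before reaching their respective servers' heads; consequently, with positive probability at least one sibling copy is already in active service at the instant the winning copy completes, producing an in-service removal. I would make this rigorous by invoking a symmetry and regeneration argument: any dispatching policy that achieves 100\% utilization must asymptotically balance the buffer loads in law (otherwise lighter-loaded buffers would idle a non-vanishing fraction of the time), and under this balance the stationary probability of the event ``all $\request$ copies of a given batch are in service simultaneously'' is bounded away from zero, yielding $\rho \geq \varepsilon > 0$ for some positive $\varepsilon$ depending on $\request$, $n$, $\mu$ and the removal-cost distribution.

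Combining, $C(T)/T \leq n\mu - \mu \bar{D}\varepsilon < n\mu$, while at 100\% utilization the arrival rate equals $n\mu$. Consequently the backlog of outstanding batches grows without bound for every $\request > 1$, making the average latency infinite. Since $\request = k = 1$ has finite average latency by the standing stability hypothesis, the no-redundancy policy is optimal.
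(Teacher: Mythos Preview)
Your approach departs substantially from the paper's and contains a genuine gap.

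The paper proves this theorem by the same sample-path coupling used for Theorems~\ref{thm:lighter_noflood} and~\ref{thm:removalCosts_memless}: construct coupled systems $T_1$ (arbitrary redundancy policy) and $T_2$ ($r=1$), align the $n$ servers one-to-one, and use the memoryless property so that whenever a server in $T_1$ begins a new job its service time can be taken equal to the residual service time of the corresponding $T_2$ server. Under 100\% utilization an induction then shows that every service completion in $T_2$ is matched by one in $T_1$, so the batch count in $T_2$ never exceeds that in $T_1$ at any instant. The removal cost enters only to make $S_1$ weakly worse than its zero-cost coupled surrogate. No throughput or stability computation is performed; the dominance holds pathwise for arbitrary arrival sequences and for arbitrary (possibly adaptive) redundancy policies. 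So your claim that your structure ``mirrors the centralized counterpart'' is not accurate: Theorem~\ref{thm:removalCosts_memless} is itself proved by coupling, not by a capacity argument.

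Your capacity argument hinges on the step you yourself flag as the ``main obstacle'': establishing $\rho=\liminf_{T}\rho(T)>0$. The justification offered---that any dispatching policy achieving 100\% utilization ``must asymptotically balance the buffer loads in law''---is neither made precise nor proved, and it is not obvious in the distributed setting, where the $r$ copies may land in buffers of very different depths so that siblings are removed while still queued (at no cost) rather than while in service. You would need to rule out dispatching rules that systematically avoid simultaneous in-service siblings, and the symmetry/regeneration sketch does not do this. Two further issues compound the gap. First, equating 100\% utilization with ``arrival rate $=n\mu$'' conflicts with the paper's standing assumption that $r=1$ is stable, which forces the arrival rate strictly below $n\mu$; the paper treats 100\% utilization as a conditioning on non-empty buffers, not as an identity for the arrival rate, so your instability conclusion does not connect to the quantity the theorem is about. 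Second, the argument as written handles only a fixed $r>1$; an adaptive policy that uses redundancy rarely would drive $\rho$ toward zero and escape your capacity bound, yet the theorem asserts optimality of $r=1$ against all such policies. The paper's coupling handles all of this uniformly.
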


\section{Conclusions and Open Problems}\label{sec:conclusion}
The prospect of reducing latency by means of redundant requests has garnered significant attention among practitioners in the recent past (e.g.,~\cite{snoeren2001mesh,andersen2005improving,pitkanen2007redundancy,han2011rpt,ananthanarayanan2012let,huang2012erasure,vulimiri2012more,dean2013tail,liang2013fast,stewart2013zoolander,flach2013reducing}). Many recent works empirically evaluate the latency performance of redundant requests under diverse settings. The goal of our work is to analytically characterize the settings under which redundant requests help (and when they hurt), and to design scheduling policies that employ redundant-requesting to reduce latency.  In this paper, we propose a model that captures key features of such systems, and under this model we analytically characterize several settings wherein redundant requests help and where they don't. For each of these settings, we also derive the optimal redundant-requesting policy. 

While we have characterized \textit{when redundant requests help} for several scenarios in this paper, the characterization for many more general settings remains open. Some questions that immediately arise are:
\begin{itemize}[$\bullet$]
\item What is the optimal redundant-requesting policy for service-time distributions and removal-costs not considered in this paper ?
\item We observed in the simulations  (e.g., Fig.~\ref{fig:flood_const_plus_exp}) that for several service-time distributions, redundant requests start hurting when the system is loaded beyond a certain threshold. In the future, we wish to use the insights developed in this paper to analytically characterize this threshold.
\item What happens when the requests or the servers are heterogeneous, or if the service-times of different jobs of a batch are not i.i.d. ? 
\item What about other metrics such as the tails of the latency, or a quantification of the amount of gains achieved via redundant requests ?
\item If we allow choosing different values of the request-degree $\request$ adaptively for different batches, what is the minimal information about the state of the system required to make this choice? What are the optimal scheduling policies in that case ?
\item In certain settings, one may be constrained with each request having the ability to get served by only a specific $m~(<n)$ of the $n$ servers. It remains to investigate which of the results for $m=n$ carry over to this setting of $m<n$ (see, for example, Fig~\ref{fig:more_servers})?
\end{itemize}
\begin{figure*}
\centering
\includegraphics[width=.5\textwidth]{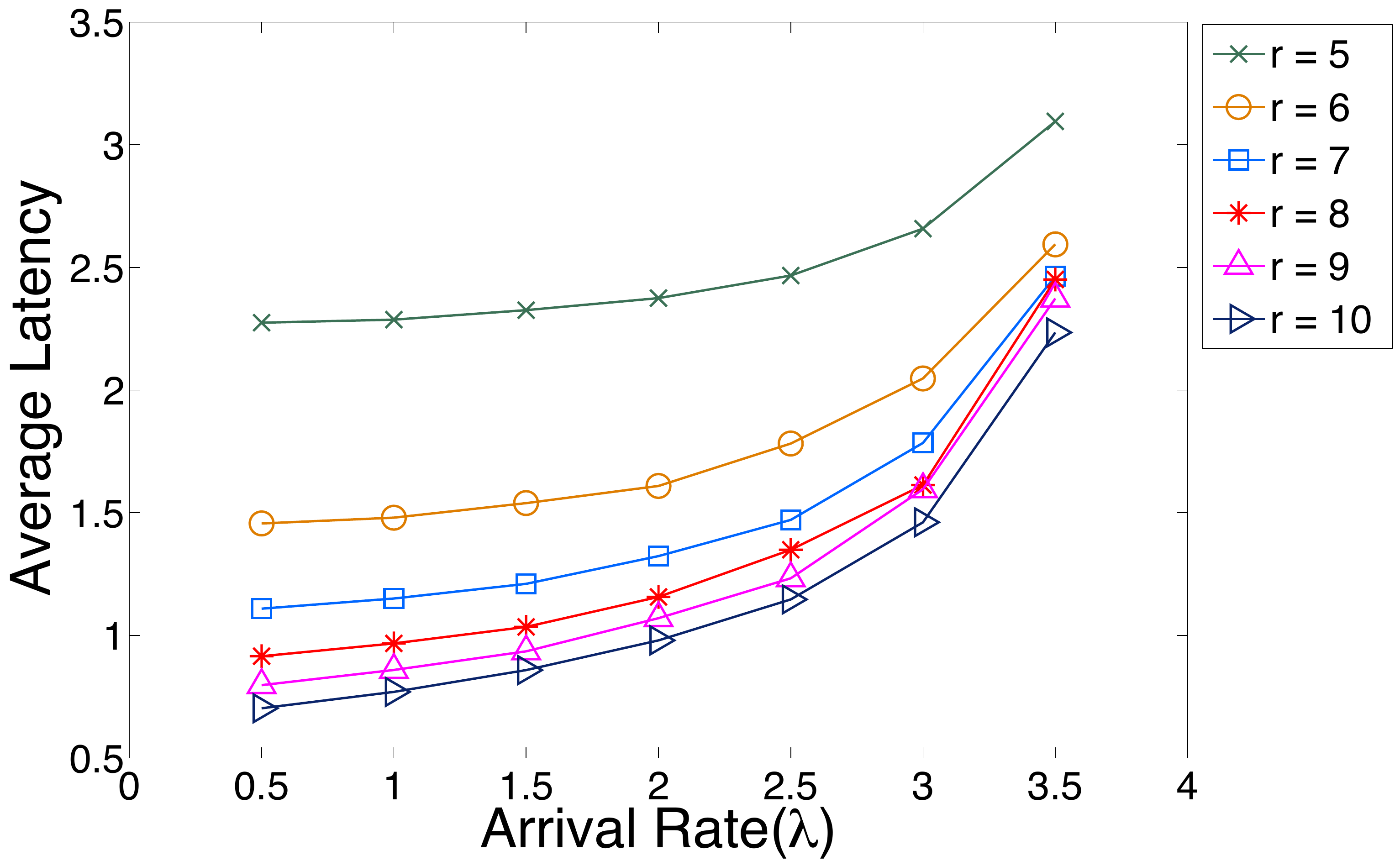}
\caption{Average latency in a system where the total number of servers is $n=20$ but for each request, a specific subset of only $m=10$ of these servers have the ability to serve it. The request must be handled by any $k=5$ distinct servers from this set of $m=10$ servers. For each batch, these $m=10$ servers is uniform from the set of $n=20$ for each batch, and is independent across batches. The service is distributed exponentially with rate $1$ and there is no removal cost. One can see that the average latency reduces with an increase in the redundancy in the requests.}
\label{fig:more_servers}
\end{figure*}

\bibliographystyle{IEEEtran}

\appendices
\section{Heavy-everywhere and light-everywhere distributions}\label{app:everywhere_dist}
In this section we derive some properties of heavy-everywhere~\eqref{eq:heavier_definition} and light-everywhere~\eqref{eq:lighter_definition} classes of distributions. We also provide examples of distributions that fall into these classes. We first state the results, following which we provide the proofs.

\begin{proposition}
The expected value of the minimum of $n$ random variables, each drawn independently from a distribution that is heavy-everywhere, is no larger than $\frac{1}{n}$ times the expected value of that distribution. The expected value of the minimum of $n$ random variables, each drawn independently from a distribution that is light-everywhere, is no smaller than $\frac{1}{n}$ times the expected value of that distribution.
\label{prop:heavy_light_mean}
\end{proposition}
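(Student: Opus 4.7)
The plan is to reduce the claim to a pointwise inequality between $F(t)^n$ and $F(nt)$, where $F(t) = P(X>t)$ is the survival function, and then integrate. I would first unwind the definition of heavy-everywhere to the super-multiplicative form $F(a+b) \geq F(a)\, F(b)$ for all $a, b \geq 0$ (with the convention that the bound is trivial when $F(b)=0$), and the reversed, sub-multiplicative inequality for light-everywhere. Then I would invoke the tail formula for non-negative random variables to write $E[X] = \int_0^\infty F(t)\,dt$ and, using the fact that the survival function of the minimum of $n$ i.i.d.\ copies is $F^n$, $E[\min(X_1,\ldots,X_n)] = \int_0^\infty F(t)^n\,dt$.

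The main step is a one-line induction. From $F(a+b) \geq F(a) F(b)$, induction on $n$ immediately yields $F(nt) \geq F(t)^n$ for every positive integer $n$ and every $t \geq 0$. Plugging this pointwise bound into the tail integral for $E[\min]$ and changing variables $s = nt$ gives
\beq
E[\min(X_1,\ldots,X_n)] = \int_0^\infty F(t)^n\,dt \;\leq\; \int_0^\infty F(nt)\,dt \;=\; \frac{1}{n}\int_0^\infty F(s)\,ds \;=\; \frac{1}{n}\,E[X],
\eeq
which is the heavy-everywhere assertion. For the light-everywhere case, the same induction run with the reversed multiplicative inequality produces $F(nt) \leq F(t)^n$ pointwise, and the identical integration then yields $E[\min(X_1,\ldots,X_n)] \geq \frac{1}{n} E[X]$.

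There is no substantive obstacle; the entire argument is a definitional rewrite, a trivial induction, and a change of variables. The only care points are (i) handling the degenerate case $P(X>b)=0$, where the conditional probability in the definition is vacuous but the multiplicative inequality still holds trivially, and (ii) using the tail formula in its general form so that distributions with atoms or without densities are covered. As a sanity check, the exponential distribution satisfies $F(a+b) = F(a) F(b)$ and is therefore simultaneously heavy-everywhere and light-everywhere; correspondingly the above bound becomes an equality, matching the familiar $E[\min] = \frac{1}{n} E[X]$ for the minimum of $n$ i.i.d.\ exponentials.
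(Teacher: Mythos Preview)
Your proposal is correct and follows essentially the same route as the paper's own proof: both rewrite the heavy-/light-everywhere condition as super-/sub-multiplicativity of the survival function, iterate to obtain $F(nt)\gtrless F(t)^n$, and then integrate the tail formula with the change of variables $s=nt$ to produce the factor $1/n$. Your explicit remarks on the degenerate case $P(X>b)=0$ and on the validity of the tail formula for distributions with atoms are welcome clarifications that the paper leaves implicit.
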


\begin{proposition}
Consider a finite set of independent random variables $X_1,\ldots,X_L$, each of whose (marginal) distributions is heavy-everywhere, such that for every $i,j$ and every $a\geq 0,\ b\geq 0$,
\[P(X_i>a) > P(X_j>a) \Rightarrow P(X_i>b) \geq P(X_j>b)~.\] 
Then, any mixture of $X_1,\ldots,X_L$ is also a heavy-everywhere distribution.
\label{prop:heavy_mix_is_heavy}
\end{proposition}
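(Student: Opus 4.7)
The plan is to verify the heavy-everywhere inequality for the mixture directly, by reducing it to a positive-correlation (Chebyshev-type) inequality on the tail functions of the components, which the hypothesis of the proposition then supplies.

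Let $p_1,\ldots,p_L$ be the mixture weights (non-negative, summing to $1$), and write $F_i(x) := P(X_i > x)$ and $F(x) := \sum_{i} p_i F_i(x)$ for the tail function of the mixture, call it $Y$. Fix $a > 0$ and $b \geq 0$ with $F(b) > 0$; the goal is to show $F(a+b) \geq F(a)F(b)$. Since each $X_i$ is itself heavy-everywhere, $F_i(a+b) \geq F_i(a) F_i(b)$ componentwise, so multiplying by $p_i$ and summing gives
\begin{equation*}
F(a+b) \;=\; \sum_{i=1}^{L} p_i F_i(a+b) \;\geq\; \sum_{i=1}^{L} p_i F_i(a) F_i(b).
\end{equation*}
It therefore suffices to prove the correlation inequality $\sum_{i} p_i F_i(a) F_i(b) \geq F(a) F(b)$.

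To do this I will invoke the standard covariance identity
\begin{equation*}
\sum_i p_i F_i(a) F_i(b) - F(a) F(b) \;=\; \tfrac{1}{2}\sum_{i,j} p_i p_j \bigl(F_i(a)-F_j(a)\bigr)\bigl(F_i(b)-F_j(b)\bigr),
\end{equation*}
which is simply the covariance of $F_I(a)$ and $F_I(b)$ when $I$ is drawn according to $(p_i)$. The problem thus reduces to showing that the sequences $(F_i(a))_i$ and $(F_i(b))_i$ are comonotone in $i$, i.e., $(F_i(a)-F_j(a))(F_i(b)-F_j(b)) \geq 0$ for every pair $(i,j)$.

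This last statement is exactly what the hypothesis of the proposition supplies: if $F_i(a) > F_j(a)$, the assumed implication forces $F_i(b) \geq F_j(b)$ and the product is non-negative; the case $F_j(a) > F_i(a)$ is symmetric; and $F_i(a) = F_j(a)$ zeroes the first factor. The main (essentially only) subtlety is that the hypothesis is phrased as a strict-implies-weak statement, but the weak inequality on the $b$-side is all comonotonicity needs, and the equality case contributes zero to the covariance sum anyway. Chaining the two displayed inequalities then yields $F(a+b) \geq F(a)F(b)$. Note that the stated independence of the $X_i$ turns out not to be used in the argument, since only their marginal tail behavior enters the tail function of the mixture.
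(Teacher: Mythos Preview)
Your proof is correct and follows essentially the same argument as the paper: both reduce to the componentwise heavy-everywhere bound and then invoke the covariance identity $\sum_i p_i F_i(a)F_i(b) - F(a)F(b) = \tfrac{1}{2}\sum_{i,j} p_i p_j (F_i(a)-F_j(a))(F_i(b)-F_j(b))$, using the hypothesis to conclude each summand is nonnegative. Your remark that independence of the $X_i$ is never used is also correct.
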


\begin{proposition}\label{prop:heavy_examples}
The following distributions are heavy-everywhere:
\begin{enumerate}
\item A mixture of a finite number of independently drawn exponential distributions.
\item A Weibull distribution with scale parameter smaller than $1$, i.e., with a pdf 
\[ f(x) = \frac{k}{\lambda} {\left(\frac{x}{\lambda}\right)}^{k-1} e^{-{\left(x/\lambda\right)}^k} \]
for any $k \in (0,1]$ and any $\lambda>0$.
\end{enumerate}
\end{proposition}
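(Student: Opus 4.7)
The plan is to handle the two claims separately, leaning on Proposition~\ref{prop:heavy_mix_is_heavy} for the mixture-of-exponentials claim and doing a direct verification of~\eqref{eq:heavier_definition} for the Weibull claim. Both parts reduce to elementary one-variable inequalities, so the proof should be short; the only care needed is in checking the hypotheses of Proposition~\ref{prop:heavy_mix_is_heavy}.

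For part (1), I would first observe that each component $X_i \sim \mathrm{Exp}(\lambda_i)$ is individually heavy-everywhere, because by memorylessness
\[ P(X_i > a+b \mid X_i > b) \;=\; P(X_i > a) \]
so~\eqref{eq:heavier_definition} holds with equality. To invoke Proposition~\ref{prop:heavy_mix_is_heavy}, I then need to verify the ordering hypothesis: whenever $P(X_i>a)>P(X_j>a)$ for some $a\geq 0$, we must have $P(X_i>b)\geq P(X_j>b)$ for every $b\geq 0$. Since the tails are $e^{-\lambda_i a}$ and $e^{-\lambda_j a}$, a strict inequality at some $a>0$ forces $\lambda_i<\lambda_j$, and then $e^{-\lambda_i b}\geq e^{-\lambda_j b}$ for all $b\geq 0$; the case $a=0$ is vacuous since the tails agree there. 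Proposition~\ref{prop:heavy_mix_is_heavy} then delivers the conclusion.

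For part (2), I would plug the Weibull survival function $P(X>x)=e^{-(x/\lambda)^k}$ directly into~\eqref{eq:heavier_definition}. After taking logarithms, the inequality to prove collapses to
\[ (a+b)^k \;\leq\; a^k + b^k \qquad \text{for all } a,b \geq 0, \]
i.e., subadditivity of the map $x\mapsto x^k$ on $[0,\infty)$ for $k\in(0,1]$. I would prove this by homogeneity: assuming $a+b>0$ and setting $t=a/(a+b)\in[0,1]$, the inequality becomes $t^k+(1-t)^k\geq 1$, which follows because $k\leq 1$ forces $t^k\geq t$ and $(1-t)^k\geq 1-t$ on $[0,1]$. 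Summing and multiplying back through by $(a+b)^k/\lambda^k$ recovers the desired bound, and hence~\eqref{eq:heavier_definition}.

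I do not anticipate any real obstacle. The mildest subtlety is the ordering hypothesis in Proposition~\ref{prop:heavy_mix_is_heavy}: one has to note that the hypothesis is stated with a strict inequality at $a$ but only requires a weak inequality at $b$, so no pathology arises from rate ties or from $a=0$. Everything else is a one-line calculation.
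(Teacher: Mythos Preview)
Your proposal is correct and follows essentially the same approach as the paper: Part~(1) invokes Proposition~\ref{prop:heavy_mix_is_heavy} after verifying memorylessness and the tail-ordering hypothesis for exponentials, and Part~(2) reduces~\eqref{eq:heavier_definition} for the Weibull to the subadditivity inequality $(a+b)^k \leq a^k + b^k$ on $[0,\infty)$ for $k\in(0,1]$. The only difference is that you supply a short proof of that subadditivity via homogeneity, whereas the paper simply asserts it.
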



\begin{proposition}\label{prop:light_sum_is_light}
The sum of a finite number of independent random variables, each of which has a (marginal) distribution that is light-everywhere, also has a distribution that is light-everywhere.
\end{proposition}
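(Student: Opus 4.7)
The plan is to reduce the statement to the case of two summands by induction, so the main task is to show closure of the light-everywhere class under the convolution of two independent variables. I begin by recording the equivalent sub-multiplicative form of \eqref{eq:lighter_definition}, namely $P(X>a+b)\leq P(X>a)\,P(X>b)$ for all $a>0,\,b\geq 0$ (trivially valid when $P(X>b)=0$, since then $P(X>a+b)=0$ as well).

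Let $X$ and $Y$ be independent and light-everywhere and set $S=X+Y$. I will show $P(S>s+t)\leq P(S>s)\,P(S>t)$ by splitting the event $\{S>s+t\}$ according to whether $X$ alone exceeds $s$. On the event $\{X>s\}$, the residual $X-s$ is, conditional on $X>s$, stochastically dominated by an independent copy $X'$ of $X$ (this is precisely the light-everywhere property applied to $X$). Since $Y$ is independent of $X$ and of the conditioning event, a standard coupling gives $(X-s)+Y\mid X>s \;\leq_{\text{st}}\; X'+Y$, and therefore
\[
P(S>s+t,\,X>s) \;\leq\; P(X>s)\,P(X'+Y>t) \;=\; P(X>s)\,P(S>t).
\]
On the complementary event $\{X\leq s\}$, disintegrating with respect to $X$ and applying the light-everywhere property of $Y$ with threshold $s-x\geq 0$ and extra time $t$ gives
\[
P(S>s+t,\,X\leq s) \;=\; \int_{0}^{s} P(Y>s+t-x)\,dF_X(x) \;\leq\; P(Y>t)\int_{0}^{s} P(Y>s-x)\,dF_X(x) \;=\; P(Y>t)\,P(S>s,\,X\leq s).
\]

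Combining the two bounds, and noting that $P(X>s)=P(S>s,\,X>s)$ (since $Y\geq 0$ implies the second event is implied by the first) and $P(Y>t)\leq P(S>t)$ (since $X\geq 0$), yields
\[
P(S>s+t) \;\leq\; P(S>t)\bigl[P(S>s,\,X>s) + P(S>s,\,X\leq s)\bigr] \;=\; P(S>s)\,P(S>t),
\]
which is the sub-multiplicative form of the light-everywhere property for $S$. Induction on the number of summands, grouping $n-1$ of them into a single light-everywhere sum via the inductive hypothesis, then finishes the proof.

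The conceptual core is the decomposition ``either the threshold $s$ is used up entirely by $X$, or some of it spills over into $Y$,'' in each case invoking the light-everywhere property of whichever variable carries the residual. The main technical obstacle I anticipate is justifying the conditional stochastic-dominance step cleanly on the event $\{X>s\}$ (the coupling of $X-s\mid X>s$ to $X'$) and taking proper care of the edge cases where survival functions vanish; the rest is a direct manipulation of the convolution integral.
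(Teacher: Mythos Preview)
Your proof is correct and follows essentially the same approach as the paper: both split the event according to whether one summand already exceeds the threshold (your $\{X>s\}$ versus $\{X\le s\}$ is exactly the paper's $\{X_2>b\}$ versus $\{X_2\le b\}$), apply the light-everywhere property of that summand in the first case and of the other summand in the second, and then recombine using $\{X>s\}\subseteq\{S>s\}$ and $\{Y>t\}\subseteq\{S>t\}$. Your presentation via the sub-multiplicative survival form and the explicit convolution integral is slightly cleaner than the paper's conditional-probability phrasing, but the underlying argument is the same.
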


\begin{proposition}\label{prop:light_examples}
The following distributions are light-everywhere:
\begin{enumerate}
\item For any $c>0$, the constant distribution with entire mass on $c$.
\item An exponential distribution that is shifted by a positive constant.
\item The uniform distribution.
\item For any pair of non-negative constants $c_1$ and $c_2$ with $2c_1 > c_2>c_1$, a distribution with its support comprising only the two constants $c_1$ and $c_2$.
\end{enumerate}
\end{proposition}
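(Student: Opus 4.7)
\textbf{Proof proposal for Proposition~\ref{prop:light_examples}.}
The plan is to verify, for each of the four distributions, the equivalent form
\[
P(X>a+b) \ \leq\ P(X>a)\,P(X>b)
\]
of the light-everywhere property, for all $a>0$ and $b\geq 0$ with $P(X>b)>0$. For each distribution I would write down the tail function $\bar F(x) := P(X>x)$ explicitly and split into cases based on how $a$, $b$, and $a+b$ compare to the breakpoints of $\bar F$. Monotonicity of $\bar F$ will handle most cases automatically; the work is in identifying the one ``substantive'' subcase in each part and reducing it to an elementary inequality.

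For item 1, $\bar F(x)=\mathbf{1}[x<c]$, so the requirement $P(X>b)>0$ forces $b<c$. The inequality then reads $1\leq 1$ if $a+b<c$ and $0\leq P(X>a)$ otherwise; both are immediate. For item 2, writing $\bar F(x)=1$ on $[0,c]$ and $\bar F(x)=e^{-\lambda(x-c)}$ on $(c,\infty)$, I would case-split on whether each of $a,b$ is at most $c$. Whenever $b<c$ the desired inequality collapses to $\bar F(a+b)\leq \bar F(a)$, which holds by monotonicity of $\bar F$. The only substantive case is $a,b\geq c$, where the inequality becomes $e^{-\lambda(a+b-c)}\leq e^{-\lambda(a+b-2c)}$, i.e.\ $c\leq 2c$, which holds since $c>0$.

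For item 3, I will take $X\sim U[0,L]$, so $\bar F(x)=(1-x/L)^+$. If $a+b\geq L$ the LHS is zero and nothing is to prove; otherwise the inequality is
\[
1-\frac{a+b}{L}\ \leq\ \Bigl(1-\frac{a}{L}\Bigr)\Bigl(1-\frac{b}{L}\Bigr),
\]
which upon expansion reads $0\leq ab/L^2$. For item 4, let $q=P(X=c_2)\in(0,1)$; then $\bar F$ equals $1$ on $[0,c_1)$, $q$ on $[c_1,c_2)$, and $0$ on $[c_2,\infty)$. Note first that $c_2>c_1$ and the hypothesis $2c_1>c_2$ together force $c_1>0$. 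If at least one of $a,b$ lies in $[0,c_1)$, say $b$, then $P(X>b)=1$ and the inequality reduces to $\bar F(a+b)\leq \bar F(a)$, which again holds by monotonicity. Otherwise both $a,b\geq c_1$, in which case $a+b\geq 2c_1>c_2$, so $\bar F(a+b)=0$ and the inequality is trivial.

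The main (and essentially the only) obstacle is the case analysis in item 4: one must use the hypothesis $2c_1>c_2$ precisely to rule out the potentially violating case in which both $a,b\in[c_1,c_2)$ while $a+b<c_2$, for in that case the LHS would be $q$ while the RHS would be $q^2<q$. The hypothesis makes this case vacuous. All other cases in all four parts follow either from monotonicity of $\bar F$ or from a one-line elementary inequality.
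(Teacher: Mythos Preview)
Your proof is correct and, for items 1 and 4, essentially identical to the paper's argument. For items 2 and 3 the paper takes a slightly different route: rather than a direct case analysis on the tail function, it first notes that both the exponential distribution and the constant distribution are light-everywhere, and then invokes the closure of the light-everywhere class under independent sums (Proposition~\ref{prop:light_sum_is_light}) to obtain the shifted exponential; likewise, for the uniform distribution it verifies $U[0,M]$ by the same algebra you give and then appeals to Proposition~\ref{prop:light_sum_is_light} to handle a uniform on an arbitrary interval. Your direct verification is more self-contained and equally short; the paper's version has the mild advantages of advertising Proposition~\ref{prop:light_sum_is_light} and of covering uniforms not starting at $0$, which your argument as written does not. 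One small presentational gap in your item 2: your dichotomy ``$b<c$'' versus ``$a,b\geq c$'' skips the case $a<c\leq b$, but this follows at once from the symmetry of $P(X>a+b)\leq P(X>a)P(X>b)$ in $a$ and $b$.
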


We now present the proofs of these claims.

%
%

\begin{proof}[Proof of Proposition~\ref{prop:heavy_light_mean}]
Let $X$ be a random variable with a distribution that is heavy-everywhere. Consider any $x>0$. Using the property of being heavy-everywhere, we have
\bea
P(X>nx) &=& P(X>nx, X>x)\\
&=& P(X>nx|X>x)P(X>x)\\
& \geq & P(X> (n-1)x ) P(X > x)\nonumber\\
&\geq& P(X > (n-2)x) P(X>x) P(X>x)\nonumber\\
&\vdots& \nonumber\\
&\geq& P(X>x)^n~.
\eea
Now consider i.i.d. random variables $X_1,\ldots,X_n$ drawn from this distribution. The expected value of their minimum is given by
\bea
E[\min\{X_1,\ldots,X_n\}] &=&  \int P(X_1>x,\ldots,X_n>x) d\mu(x)\nonumber\\
&=& \int P(X_1>x)\cdots P(	X_n>x) d\mu(x)\nonumber\\
&=& \int P(X>x)^n d\mu(x)\nonumber\\
&\leq& \int P(X>nx) d\mu(x)\nonumber\\
&=& \frac{1}{n} E[X]~.
\eea

If the distribution is light-everywhere, then each of the inequalities in the entire proof above are flipped, leading to the result
\bea
E[\min\{X_1,\ldots,X_n\}] &\geq & \frac{1}{n} E[X]~.
\eea
\end{proof}

\begin{proof}[Proof of Proposition~\ref{prop:heavy_mix_is_heavy}]
Suppose $X$ is drawn from a mixture of $L$ independent random variables $X_1,\ldots,X_L$ for some $L\geq 1$ whose (marginal) distributions satisfy the conditions stated in the proposition. In particular, suppose $X$ takes value $X_i$ with probability $p_i \geq 0$ (with $\sum_{i=1}^{L} p_i = 1$). Then
\bea
P(X>a+b)&=& \sum_{i=1}^{L} p_i  P(X_i>a+b) \nonumber\\
&\geq & \sum_{i=1}^{L} p_i P(X_i>a)P(X_i>b) \nonumber\\
& = & \sum_{j=1}^{L} \sum_{i=1}^{L} p_i p_j P(X_i>a)P(X_i>b) \nonumber\\
& = & \left(\sum_{i=1}^{L} p_i P(X_i>a) \right) \left(\sum_{j=1}^{L} p_j P(X_j>b) \right)\nonumber\\
&& + \frac{1}{2} \sum_{j=1}^{L} \sum_{i=1}^{L} p_i p_j (P(X_i>a)-P(X_j>a))(P(X_i>b)-P(X_j>b)) \nonumber\\
&\geq & \left(\sum_{i=1}^{L} p_i P(X_i>a) \right) \left(\sum_{j=1}^{L} p_j P(X_j>b) \right)\label{eq:heavy_mix_0}\\
&=& P(X>a)P(x>b)~,
\eea
where~\eqref{eq:heavy_mix_0} is a result of the assumption that $P(X_i>a)\geq P(X_j>a) \Rightarrow P(X_i>b)\geq P(X_j>b)$.
\end{proof}

\begin{proof}[Proof of Proposition~\ref{prop:heavy_examples}]
Let $X$ be a random variable drawn from the distribution under consideration.
\begin{enumerate}
\item A mixture of a finite number of independently drawn exponential distributions.\\
The exponential distribution trivially satisfies~\eqref{eq:heavier_definition} and hence is heavy-everywhere. Furthermore, if $X_i$ and $X_j$ are exponentially distributed with rates $\mu_i$ and $\mu_j$, \[P(X_i>a)>P(X_j>a) \Rightarrow e^{-\mu_i a}>e^{-\mu_j a} \Rightarrow -\mu_i > -\mu_j \Rightarrow P(X_i>b)\geq P(X_j>b)~.\] This allows us to apply Prop.~\ref{prop:heavy_mix_is_heavy}, giving the desired result.
\item A Weibull distribution with scale parameter smaller than $1$.\\
The Weibull distribution has a complementary c.d.f.
\[
P(X>x) = e^{-\left(x/\lambda\right)^k}~.
\]
For $k\in(0,1]$, and for any $a,b>0$, we know that
\bea 
(a+b)^k &\leq& a^k + b^k\\
\Rightarrow -\left(\frac{a+b}{\lambda}\right)^k &\geq& -\left(\frac{a}{\lambda}\right)^k-\left(\frac{b}{\lambda}\right)^k\\
\Rightarrow e^{-\left(\frac{a+b}{\lambda}\right)^k} &\geq& e^{-\left(\frac{a}{\lambda}\right)^k}e^{-\left(\frac{b}{\lambda}\right)^k}\\
\Rightarrow P(X>a+b)&\geq& P(X>a)P(X>b)~.
\eea
\end{enumerate}
\end{proof}

\begin{proof}[Proof of Proposition~\ref{prop:light_sum_is_light}]
Let $X_1$ and $X_2$ be independent random variables whose (marginal) distributions are light-everywhere. 
Let $X=X_1+X_2$, Then, 
\bea
P(X>a+b|X>b) &=& P(X_1+X_2>a+b,X_2>b|X_1+X_2>b)\nonumber\\
&&+P(X_1+X_2>a+b,X_2\leq b|X_1+X_2>b)\\
&=& P(X_1+X_2>a+b|X_2>b, X_1+X_2>b) P(X_2>b|X_1+X_2>b)\nonumber\\
&&+P(X_1+X_2>a+b | X_2\leq b, X_1+X_2>b) P(X_2 \leq b | X_1+X_2 >b).\label{eq:sum_light_proof0}
\eea
Now,
\bea
P(X_1+X_2>a+b|X_2>b,X_1+X_2>b) &=& P(X_1+X_2>a+b|X_2>b)\\
&=& P(X_2>a+b-X_1|X_2>b)\\
&\leq & P(X_2>a-X_1) \label{eq:sum_light_proof1}\\
&\leq & P(X_1+X_2>a)~,
\eea
where the inequality~\eqref{eq:sum_light_proof1} utilizes the light-everywhere property of the distribution of $X_2$.
Also,
\bea
P(X_1+X_2>a+b|X_2 \leq b, X_1+X_2>b) &=& P(X_1 >a+b-X_2 | X_2\leq b, X_1 > b-X_2)\\
&\leq& P(X_1 > a)\label{eq:sum_light_proof2}\\
&\leq& P(X_1 + X_2 > a)~,
\eea
where the inequality~\eqref{eq:sum_light_proof2} utilizes the light-everywhere property of the distribution of $X_1$. Putting it back together in~\eqref{eq:sum_light_proof0} we get
\bea
P(X\!>\!a\!+\!b|X\!>\!b) \!\!\!\!&\!\!\leq\!\!&\!\!\!\! P(X_1\!+\!X_2>a) P(X_2\!>\!b|X_1\!+\!X_2\!>\!b)\!+\! P(X_1\!+\!X_2>a) P(X_2 \!\leq\! b | X_1\!+\!X_2 \!>\!b)\\
&=&  P(X>a).
\eea
\end{proof}

\begin{proof}[Proof of Proposition~\ref{prop:light_examples}]
Let $X$ be a random variable drawn from the distribution under consideration.
\begin{enumerate}
\item For any $c>0$, the constant distribution with entire mass on $c$.\\
If $a \leq c$ then $P(X>a)=1$. If $a > c$ then $P(X > a+b)=0$. Thus the constant distribution satisfies~\eqref{eq:lighter_definition}.
\item An exponential distribution that is shifted by a positive constant.\\
The exponential distribution trivially satisfies~\eqref{eq:lighter_definition} and is light-everywhere. A constant is also light-everywhere as shown above. Applying Proposition~\ref{prop:light_sum_is_light}, we get the desired result.
\item The uniform distribution.\\
We first show that for every $M>0$. the uniform distribution on the interval $[0,M]$ is light-everywhere. If $a+b \geq M$ then $P(X> a+b)=0$, thus trivially satisfying~\eqref{eq:lighter_definition}. If $a+b<M$ then $P(X>a)=\frac{M-a}{M}$ and $P(X>a+b|X>b)=\frac{M-a-b}{M-b}$. Using the fact that $a \geq 0, b\geq 0$, some simple algebraic manipulations of these expressions lead to~\eqref{eq:lighter_definition}. Since a constant is light-everywhere, Proposition~\ref{prop:light_sum_is_light}  completes the result.
\item For any pair of non-negative constants $c_1$ and $c_2$ with $2c_1 > c_2>c_1$, a distribution with its support comprising only the two constants $c_1$ and $c_2$.\\
If $a+b\geq c_2$ then $P(X>a+b)=0$. If $a<c_1$ then $P(X>a)=1$. Finally, if $a\geq c_1$ and $a+b<c_2$ then the constraint of $2c_1>c_2$ implies $b < c_1$. Thus in this setting, $P(X>a+b|X>b)=P(X=c_2)=P(X>a)$.
\end{enumerate}
\end{proof}

\section{Proofs}\label{app:proofs}
We first present a brief description of the general proof technique we follow to obtain the analytical results, following which we provide the proofs of the individual results.

The general proof technique is depicted pictorially in Fig.~\ref{fig:proof_technique}. Consider two identical systems $S_1$ and $S_2$ with different redundant-requesting policies. Suppose we wish to prove that the redundant-requesting policy of system $S_2$ leads to a lower latency as compared to the redundant-requesting policy of system $S_1$. To this end we first construct two new hypothetical systems $T_1$ and $T_2$. The construction is such that the performance of system $T_1$ is statistically identical or better than $S_1$, and that of $T_2$ is statistically identical or worse than $S_2$. The two systems $T_1$ and $T_2$ are also \textit{coupled} in the following manner. The construction establishes a one-to-one correspondence between the $n$ servers of $T_1$ and the $n$ servers of $T_2$. Furthermore, it also establishes a one-to-one correspondence between the service events occurring in both systems, i.e., the completion of any job in $T_1$ is associated to the completion of a unique job in $T_2$ and vice versa. The same sequence of arrivals is applied to both systems. 

Such a coupling facilitates an apples-to-apples comparison between the two systems. We exploit this and show that at any point in time, system $T_2$ is in a better state than system $T_1$. Putting it all together, it implies that system $S_2$ is better than system $S_1$. 

Most interestingly, this technique allows us to handle arbitrary arrival sequences. Furthermore, it does not restrict the results to the (asymptotic) setting when the system is in steady state, but allows the results to be applicable to any interval of time.

\begin{figure*}
\centering
\includegraphics[width=.75\textwidth]{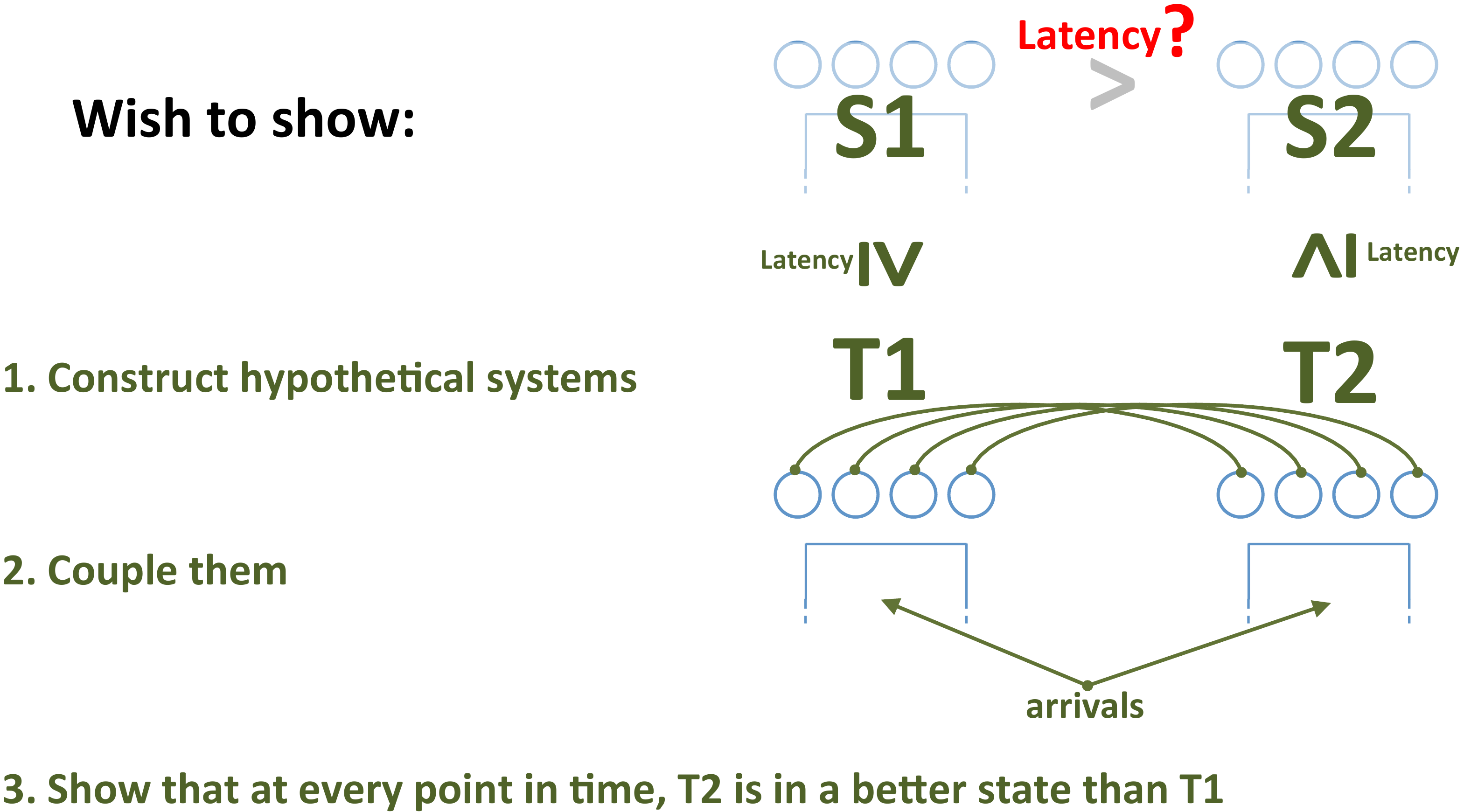}\\~
\caption{A pictorial depiction of the general proof technique followed in this paper.}
\label{fig:proof_technique}
\end{figure*}

We now provide proofs of the analytical results presented in the paper.

\begin{IEEEproof}[\textbf{Proof of Theorem~\ref{thm:rep_flood}} (centralized, memoryless service, no removal cost, $k=1$)]\label{proof:rep_flood}
Consider two systems, system $S_1$ with request-degree $\request_1$ and system $S_2$ with request-degree $\request_2~(>\request_1)$, both having system parameters $(n,\ k=1)$, the same arrival process, and the same rate of service. In the proof, we shall construct two new hypothetical systems $T_1$ and $T_2$ such that the statistics of $T_1$ are identical to $S_1$, and the statistics of $T_2$ are identical to $S_2$. We shall then show that system $T_2$ outperforms system $T_1$, and conclude that $S_2$ outperforms $S_1$.

The new system $T_1$ is defined as follows. The system $T_1$ is also associated to parameters $(n,\ k=1)$, has the same arrival and service processes as $S_1$, and follows the scheduling protocol described in Algorithm~\ref{alg:redundant_requests} with request-degree $\request_1$. However, after every service-event, we perform a specific permutation of the $n$ servers. Since the $n$ servers have independent and memoryless service time distributions with identical rates, the system $T_1$ remains statistically identical to $S_1$. In particular, the two systems $T_1$ and $S_1$ have identical distributions of the latency and buffer occupancy. The specific permutation applied is as follows. At any point in time, consider denoting the $n$ servers by indices `1',$\ldots$,`n'. Upon completion of any job at any server, the servers are permuted such that the busy servers have the lowest indices and the idle servers have the higher indices. In a similar manner, we construct $T_2$ to be a system identical to $S_2$, but again permuting the servers in $T_2$ after every job completion such that the busy servers have the lowest indices. Thus $T_2$ is statistically identical to $S_2$.

In the system under consideration, at any point in time, there are $(n+1)$ processes simultaneously going on: the arrival process and the processes at the $n$ servers. 
The assumption of memoryless service times allows us to assume that a (fictitious) service process continues to execute even in an idle server, although no job is counted as served upon completion of the process. Let us call the completion of any of these processes as an \textit{event}. In this proof, we assume the occurrence of any arbitrary sequence of events, and evaluate the performance of systems $T_1$ and $T_2$ under this sequence of events. Since the arrivals into the system and the memoryless processes at the servers are all independent of the state of the system, we can assume \textit{the same} sequence of events to occur in the two systems.

We begin by showing that under an identical sequence of events (the arrivals and server completions) in systems $T_1$ and $T_2$, the number of batches remaining to be completely served in $T_2$ at any point of time is no more than number of batches remaining in $T_1$ at that time. Without loss of generality, we shall prove this statement only at times immediately following an event, since the systems do not change state between any two consecutive events. With some abuse of notation, for $z \in \{0,1,2,\ldots\}$, we shall use the term ``time $z$'' to denote the time immediately following the $z\supth$ event.

Assume that the two systems begin in identical states at time $0$. For system $T_i~(i\in\{1,2\})$, let $b_i(z)$ denote the number of batches remaining in system $T_i$ at time $z$. The proof proceeds via induction on $z$. The induction hypothesis is that at any time $z$, we have $b_1(z) \geq b_2(z)$. Since the two systems begin in identical states, $b_1(0)=b_2(0)$. Now suppose the induction hypothesis is satisfied at time $(z-1)$. We shall now show that it is satisfied at time $z$ as well.

Suppose the $z\supth$ event is the arrival of a new batch. Then
\bea b_1(z) &=& b_1(z-1)+1\\
&\geq& b_2(z-1)+1\label{eq:rep_proof_a} \\
&=&b_2(z)
\eea
where~\eqref{eq:rep_proof_a} follows from the induction hypothesis. Thus, the hypothesis is satisfied at time $z$. 

Now suppose the $z\supth$ event is the completion of the exponential timer of one of the $n$ servers (in both the systems). We first consider the case $b_1(z-1) \geq b_2(z-1) + 1$. Since the completion of the timer at a server can lead to the completion of the service of at most one batch, it follows that $b_1(z) \geq b_2(z)$ in this case. Now consider the case $b_1(z-1) = b_2(z-1)$. Since $k=1$, the number of servers occupied in system $T_i~(i\in\{1,2\})$ at time $(z-1)$ is equal to $\min\{\request_i b_i(z-1),\ n\}$. Furthermore, from the construction of systems $T_1$ and $T_2$ described above (recall the permutation of servers), it must be that the first $\min\{\request_i b_i(z-1),\ n\}$ servers are occupied at time $(z-1)$ in system $T_i$. Thus, since $\request_1<\request_2$ and $b_1(z-1) = b_2(z-1)$, the set of servers occupied at time $(z-1)$ in $T_1$ is a subset of the servers occupied in $T_2$. Now, since $k=1$, an event at a server triggers the completion of service of a batch if and only if that server was not idle. Thus, if this event leads to the completion of service of a batch in $T_1$, it also leads to the completion of service of a batch in system $T_2$. It follows that $b_1(z) \geq b_2(z)$. We have thus shown that at any point in time, the number of batches remaining in system $T_2$ is no more than that under system $T_1$.

The arguments above show that the distribution of the number of batches remaining in $T_1$ dominates that in $T_2$: with $B_1$ and $B_2$ denoting the number of batches in the system $T_1$ and $T_2$ respectively under steady state, $P(B_1 > x) \geq P(B_2 > x)$ for all $x\geq 0$. Since the average latency is proportional to the average system occupancy, it follows that the latency faced by a batch on an average in system $T_2$ is no more than that in $T_1$. These properties carry over to $S_1$ and $S_2$ since the statistics of $S_1$ and $S_2$ are identical to those of $T_1$ and $T_2$ respectively.

From arguments identical to the above, it follows that having a request degree of $n$ for each batch minimizes the average latency as compared to any other redundant requesting policy, including ones where a different request degree may be chosen (adaptively) for different batches.

Finally, we show that if $T_1$ employs a fixed request-degree $\request<n$ for \textit{all} batches, and $T_2$ employs $r=n$ for all batches, then the average latency under $T_2$ is strictly smaller. At any given time, there is a non-zero probability of the occurrence of a sequence of service-events that empty system $T_1$ (which also results in $T_2$ getting emptied). Now, upon arrival of a batch, this new batch is served in $\request<n$ servers of $T_1$ and in all $n$ servers of $T_2$, and hence there is a strictly positive probability that the batch completes service in $T_2$ before it completes service in $T_1$ and also before a new batch arrives. This event results in $b_2(\cdot)<b_1(\cdot)$, and since this event occurs with a non-zero probability, we can draw the desired conclusion. 
\end{IEEEproof}

\begin{IEEEproof}[\textbf{Proof of Theorem~\ref{thm:memoryless_general_k}} (centralized, memoryless service, no removal cost, general $k$)]\label{proof:memoryless_general_k}
Consider two systems, system $S_1$ with an arbitrary redundant-requesting policy and  system $S_2$ with request-degree $n$, both having system parameters $(n,\ k)$, the same arrival process, and the same rate of service. In the proof, we shall construct two new systems $T_1$ and $T_2$ such that the statistics of $T_1$ are identical to $S_1$, and the statistics of $T_2$ are identical to $S_2$. We shall then show that system $T_2$ outperforms system $T_1$, and conclude that $S_2$ outperforms $S_1$.

In either system, at any point in time, there are $(n+1)$ processes simultaneously going on: the arrival process and the processes at the $n$ servers. The assumption of memoryless service times allows us to assume that a (fictitious) service process continues to execute even in an idle server, although no job is counted as served upon completion of the process. 
Let us term the completion of any of these $(n+1)$ timers as the an \textit{event}. In this proof, we assume the occurrence of any arbitrary sequence of events, and evaluate the performance of systems $T_1$ and $T_2$ under this sequence of events. Since the arrivals into the system and the memoryless processes at the servers are all independent of the state of the system, we can assume \textit{the same} sequence of events to occur in the two systems.

We shall now show that under an identical sequence of events (arrivals and server completions) in $T_1$ and $T_2$, the number of batches remaining in system $T_1$ is at least as much as that in $T_2$ at any given time. Without loss of generality, we shall prove this statement only at times immediately following an event, since the states of the systems do not change in between any two events. Abusing some notation, for $z \in \{0,1,2,\ldots\}$, we shall use the term ``time $z$'' to denote the time immediately following the $z\supth$ event. 

Assume that the two systems begin in the same state at time $z=0$. For system $T_i~(i\in\{1,2\})$, let $b_i(z)$ denote the number of batches remaining in system $T_i$ at time $z$. The proof proceeds via induction on the time $z$. The induction hypothesis is that at any time $z$:
\begin{enumerate}[(a)]
\item $b_1(z) \geq b_2(z)$, and
\item for any $z'>z$, if there are no arrivals between time $z$ and $z'$ (including at time $z'$), then $b_1(z') \geq b_2(z')$.
\end{enumerate}
The hypotheses are clearly true at $z=0$, when the two systems are in the same state. Now, let us consider them to be true for time $z~(\geq 0)$. Suppose the next event occurs at time $(z+1)$. We need to show that the hypotheses are true even after this event at time $(z+1)$.

First suppose the event was the completion of an exponential-timer at one of the $n$ servers. Then there has been no arrival between times $z$ and $(z+1)$. This allows us to apply hypothesis (b) at time $z$ with $z' = z+1$, which implies the satisfaction of both the hypotheses at time $(z+1)$. 

Now suppose the event at time $(z+1)$ is the arrival of a new batch. Then, hypothesis (a) is satisfied at time $(z+1)$ since $b_1(z+1)=b_1(z)+1\geq b_2(z)+1=b_2(z+1)$. We now show that hypothesis (b) is also satisfied. Consider any sequence of server-events, and any time $z' > z+1$ such that there were no further arrivals between times $(z+1)$ and $z'$.

Let $a_1(z')$ and $a_2(z')$ be the number of batches remaining in the two systems at time $z'$ \textit{if the new batch had not arrived} but the sequence of server-events was the same as before. From hypothesis (b) at time $z$, we know that $a_1(z') \geq a_2(z')$. Also note that the scheduling protocol described in Algorithm~\ref{alg:redundant_requests} gives priority to the batch that had arrived earliest, and as a consequence, a server serves a job from the new batch only when it cannot serve any other batch. It follows that under any sequence of server-events, for $i\in\{1,2\}$, $b_i(z') = a_i(z')+1$ if $k$ jobs of the new batch have not completed service in $T_i$, else $b_i(z') = a_i(z')$. When $ b_1(z')=a_1(z')+1$, it follows that $ b_1(z')=a_1(z')+1 \geq a_2(z') + 1 \geq b_2(z')$. It thus remains to show that $b_1(z')=a_1(z') \Rightarrow b_2(z')  \leq b_1(z')$. The condition $b_1(z')=a_1(z')$ implies that $k$ jobs of the new batch have completed service in system $T_1$ at or before time $z'$. Let $z_1,\ldots,z_k$ ($z_1<\ldots<z_k\leq z'$) be the events when the $k$ jobs of the new batch are served in system $T_1$. Then, at these times, the corresponding servers must have been idle in system $T_1$ if the new batch had not arrived.

Consider another sequence of events that is identical to that discussed above, but excludes the server-events that happened at times $z_1,\ldots,z_k$, and also excludes the arrival at time $(z+1)$. Let $c_i(z')$ denote the number of batches remaining in this situation at time $z'$. From the arguments above, we get $c_1(z')=a_1(z')$. From the second hypothesis, we also have $c_2(z') \leq c_1(z')$. Thus we already have $c_2(z') \leq c_1(z') = a_1(z') = b_1(z')$, and hence for our goal of showing $b_2(z')\leq b_1(z')$, it now suffices to show that $b_2(z')\leq c_2(z')$. 

If $b_2(z')=0$ then we automatically have $b_2(z') \leq b_1(z')$ and there is nothing left to show. Thus, we consider the case $b_2(z')>0$, i.e., system $T_2$ is non-empty at time $z'$. We shall now see how to count the number of batches in any system at time $z'$, under the condition that there were \textit{no arrivals} between time $(z+1)$ and $z'$. Consider $(n+1)$ counters: one counter each for the $n$ servers and one `global' counter. At time $(z+1)$, let the value of the counter of any server be equal to the number of jobs that this server has finished serving from the batches that are still remaining in the system. Let the value of the global counter be $0$ at this time. Now, whenever a server-event occurs, add $1$ to the counter associated to that server, irrespective of whether the server had a job or not. Whenever the counters of any $k$ servers become greater than zero, add $1$ to the global counter, and subtract $1$ from the counters of these $k$ servers. One can see that in this process, the value of the global counter at any time gives the number of batches that have finished service since the time we started counting. With this in mind, we shall compare the sequence of events that includes the events at $z_1,\ldots,z_k$ to that which excludes these events. Since the events $z_1,\ldots,z_k$ must correspond to events at $k$ \textit{distinct} servers, the service-events at $z_1,\ldots,z_k$ cause the global counter of system $T_2$ to increase by one. Since $T_2$ also had one additional arrival as compared to the system of $c_2(\cdot)$, it must be that $b_2(z')=c_2(z')$. Putting the pieces together, we get that the number of batches served in $T_2$ at any time is at least as much as that served in $T_1$ at any time.

Since the average latency is proportional to the average system occupancy, it follows that the latency faced by a batch on an average in system $T_2$ is smaller than that in $T_1$. These properties carry over to $S_1$ and $S_2$ since the statistics of $S_1$ and $S_2$ are identical to those of $T_1$ and $T_2$ respectively.

Finally, we show that if $T_1$ employs a fixed request-degree $\request<n$ for \textit{all} batches, and $T_2$ employs $r=n$ for all batches, then the average latency under $T_2$ is strictly smaller. At any given time, there is a non-zero probability of the occurrence of a sequence of service-events that empty system $T_1$ (which also results in $T_2$ getting emptied). Now, upon arrival of a batch, this new batch is served in $\request<n$ servers of $T_1$ and in all $n$ servers of $T_2$, and hence there is a strictly positive probability that the batch completes service in $T_2$ before it completes service in $T_1$ and also before a new batch arrives. This event results in $b_2(\cdot)<b_1(\cdot)$, and since this event occurs with a non-zero probability, we can draw the desired conclusion.
Thus, the distribution of the system occupancy in $T_2$ is strictly dominated by that of $T_1$.
\end{IEEEproof}

\begin{proof}[\textbf{Proof of Theorem~\ref{thm:heavier_flood}} (centralized, heavy-everywhere service, no removal cost, $k=1$, high load)]\label{proof:heavier_flood}
Consider two systems, system $S_1$ with some arbitrary redundant-requesting policy, and system $S_2$ with request-degree $n$ for all batches. We shall now construct two new hypothetical systems $T_1$ and $T_2$ such that $T_1$ is statistically identical to $S_1$ and $T_2$ is worse than $S_2$, and show that the performance of $T_1$ is worse than that of $T_2$.

The two new systems $T_1$ and $T_2$ are constructed as follows. Both systems have the same parameters $n$ and $k=1$, and retain the redundant-requesting policies of $S_1$ and $S_2$ respectively. The service-time distribution in $T_1$ is identical to that in $S_1$. On the other hand, we shall make the service time distribution of $T_2$ \textit{worse} than that of $S_2$ in the manner described below. 

Let us fix some arbitrary one-to-one correspondence between the $n$ servers of system $T_1$ and the $n$ servers of system $T_2$. Consider any point in time when a server in system $T_2$ is just beginning the service of a job. Let $X$ denote the random variable corresponding to this service time. Let $P_H$ denote the law associated to the heavy-everywhere service-time distribution under consideration. Since the systems operate at 100\% server utilization, the corresponding server in $T_1$ is not idle at this point in time and is serving some job. When this server in system $T_2$ begins service, suppose the job in the corresponding server of system $T_1$ began to be serviced $t>0$ units of time ago. Then we modify the distribution of $X$ and let it follow the law 
\[P(X>x) = P_H(X>x+t|X>t)\quad \forall x \geq 0~.\]
Since the distribution $P_H$ is heavy-everywhere~\eqref{eq:heavier_definition}, the service under system $T_2$ is no better than that under $S_2$. As a result of the construction above, whenever a job begins to be processed in system $T_2$, it has a service-time distribution that is identical to the distribution of the service-time of the job in the corresponding server in $T_1$. We couple the servers even further by assuming whenever a server in $T_2$ begins a new job, the time taken for this job to be completed is \textit{identical} to that taken for the job in the corresponding server in $T_1$ (unless, of course, some other job of the batch completes service first and this job is removed). We also feed an identical sequence of arrivals to the two systems $T_1$ and $T_2$. This completes the construction of the two systems $T_1$ and $T_2$.

Note that the aforementioned coupling of service-times between corresponding servers of systems $T_1$ and $T_2$ only takes place when the server in $T_2$ begins a job. The case when a server of $T_1$ begins serving a new job when the corresponding server of $T_2$ is already serving a job is not accounted for. The induction hypothesis below handles such situations.

We start at any point in time when the two systems are in an identical state, and show that the average latency faced by the batches in $T_2$ from then on is no larger than that faced by batches in $T_1$. We shall now show the following two properties via an induction on time:
\begin{enumerate}
\item[a)] At any point in time, the number of batches in system $T_2$ is no more than the number of batches in system $T_1$.
\item[b)] At any point in time, if a server in system $T_1$ begins service of a job, the corresponding server in $T_2$ also begins service of some job. 
\end{enumerate}
Part (b) of the hypothesis ensures that the service times of the jobs in corresponding servers of systems $T_1$ and $T_2$ are always identical (via the construction above).

As mentioned previously, let us start at any point in time when the two systems are in an identical state. Since the systems are in an identical state, both hypotheses hold true at this time. Without loss of generality, we shall now consider only the times immediately following an event in either system, where an event is defined as an arrival of a batch or the completion of processing by a server. First consider any time that immediately follows an arrival. By our induction hypothesis, just before the arrival, the number of batches in $T_2$ was no more than that in $T_1$. The arrival only increases the number of batches in both systems by $1$, and hence induction hypothesis (a) still stands. Under a 100\% server utilization, an arrival does not trigger the beginning of a service in either system. Thus, hypothesis (b) continues to hold. Let us now consider an event where a server completes processing a job. Due to hypothesis (b), the service times at corresponding servers in the two systems were coupled. As a result, the next service completes at the same time in corresponding servers of both systems. This reduces the number of batches in both systems by one, thus continuing to satisfy hypothesis (a). Furthermore, since we have assumed a 100\% utilization of the servers, there is at least one batch waiting in the buffer in both the systems. In system $T_2$, since we had $k=1$, $r=n$ and no removal cost, at any given time each of the $n$ servers in system $T_2$ will be serving jobs of the same batch. Thus jobs in all the servers of $T_2$ are removed from the system, and are replaced by (new) jobs of the next batch. As a result, upon any service-event, each of the servers in $T_2$ begin serving new jobs, thus satisfying hypothesis (b). Due to the specific construction of the two systems, the service times of these new jobs in the servers of $T_2$ are identical to those of jobs in corresponding servers of $T_1$.

This completes the proof of the induction hypothesis, and in particular that the number of batches in $T_2$ at any time is no more than the number of batches in system $T_1$. The fact that the average latency is proportional to the average number of batches in the system implies that the average latency in system $T_1$ is no smaller than in $T_2$. Finally, the constructions of the two systems $T_1$ and $T_2$ ensured that system $T_2$ is worse than $S_2$, and system $T_1$ is statistically identical to $S_1$, thus leading to the desired result.

Finally, suppose system $T_1$ employs a fixed request-degree $r<n$ for all batches. Further suppose that the heavy-everywhere distribution is such that~\eqref{eq:heavier_definition} holds with a strict inequality for a set of events that have a probability bounded away from zero. Under this setting, the aforementioned construction is such that system $T_2$ is worse than system $S_2$ by a non-trivial amount, and as a result, the average latency in system $S_2$ is strictly smaller than that of $S_1$.
\end{proof}

\begin{proof}[\textbf{Proof of Theorem~\ref{thm:lighter_noflood}} (centralized, light-everywhere service, any removal cost, $k=1$, high load)]\label{proof:lighter_noflood}
Consider two systems, system $S_1$ with some arbitrary redundant-requesting policy, and system $S_2$ with request-degree $r=k=1$ for all batches. We shall now construct two new hypothetical systems $T_1$ and $T_2$ such that $T_1$ is statistically identical to $S_1$ and $T_2$ is worse than $S_2$, and show that the performance of $T_1$ is worse than that of $T_2$.

The two new systems $T_1$ and $T_2$ are constructed as follows. Both systems have the same parameters $n$ and $k=1$, and retain the redundant-requesting policies of $S_1$ and $S_2$ respectively. The service-time distribution in $T_2$ is identical to that in $S_2$. On the other hand, we shall make the service time distribution of $T_1$ \textit{better} than that of $S_1$ in the manner described below.

Let us fix some arbitrary one-to-one correspondence between the $n$ servers of system $T_1$ and the $n$ servers of system $T_2$. Consider any point in time when a server in system $T_1$ is just beginning the service of a job. Let $X$ denote the random variable corresponding to this service time. Let $P_L$ denote the law associated to the heavy-everywhere service-time distribution under consideration. Since the systems operate at 100\% server utilization, the corresponding server in $T_2$ is not idle at this point in time and is serving some job. When this server in system $T_2$ begins service, suppose the job in the corresponding server of system $T_1$ began to be served $t>0$ units of time ago. Then we modify the distribution of $X$ and let it follow the law 
\[P(X>x) = P_L(X>x+t|X>t)\quad \forall x \geq 0~.\]
Since the distribution $P_L$ is light-everywhere~\eqref{eq:lighter_definition}, the service under system $T_1$ is no better than that under $S_1$. As a result of the construction above, whenever a job begins to be processed in system $T_1$, it has a service-time distribution that is identical to the distribution of the service-time of the job in the corresponding server in $T_2$. We couple the servers even further by assuming whenever a server in $T_1$ begins a new job, the time taken for this job to be completed is \textit{identical} to that taken for the job in the corresponding server in $T_2$ (unless, of course, some other job of the batch completes service first and this job is removed). We also feed an identical sequence of arrivals to the two systems $T_1$ and $T_2$. This completes the construction of the two systems $T_1$ and $T_2$.

Note that the aforementioned coupling of service-times between corresponding servers of systems $T_1$ and $T_2$ only takes place when the server in $T_1$ begins a job. The case when a server of $T_2$ begins serving a new job when the corresponding server of $T_1$ is already serving a job is not accounted for. The induction hypothesis below handles such situations.

We start at any point in time when the two systems are in an identical state, and show that the average latency faced by the batches in $T_2$ from then on is no more than that faced by batches in $T_1$. We shall now show the following two properties via an induction on time:
\begin{enumerate}
\item[a)] At any point in time, the number of batches in system $T_2$ is no more than the number of batches in system $T_1$.
\item[b)] At any point in time, if a server in system $T_2$ begins service of a job, the corresponding server in $T_1$ also begins service of some job. 
\end{enumerate}
Part (b) of the hypothesis ensures that the service times of the jobs in corresponding servers of systems $T_1$ and $T_2$ are always identical (via the construction above).

As mentioned previously, let us start at any point in time when the two systems are in an identical state. Since the systems are in an identical state, both hypotheses hold true at this time. Without loss of generality, we shall now consider only the times immediately following an event in either system, where an event is defined as an arrival of a batch or the completion of processing at server. First consider any time that immediately follows an arrival. By our induction hypothesis, just before the arrival, the number of batches in $T_2$ was no more than that in $T_1$. The arrival only increases the number of batches in both systems by $1$, and hence induction hypothesis (a) still stands. Under a 100\% server utilization, an arrival does not trigger the beginning of a service in either system. Thus, hypothesis (b) continues to hold. Let us now consider an event where a server completes processing a job. Due to hypothesis (b), the service times at corresponding servers in the two systems were coupled. As a result, the next service completes at the same time in corresponding servers of both systems. This reduces the number of batches in both systems by one, thus continuing to satisfy hypothesis (a). Furthermore, since we have assumed a 100\% utilization of the servers, there is at least one batch waiting in the buffer in both the systems. In system $T_2$, since we had $k=1$ and $r = k=1$, only this server begins serving a new job, while all remaining $(n-1)$ servers continue processing the jobs they already have. Now, the corresponding server in $T_1$ also had a server-event and begins serving a new job at this moment. Thus this satisfies hypothesis (b). Due to the specific construction of the two systems, the service times of these new jobs in the servers of $T_2$ are identical to those of jobs in corresponding servers of $T_1$.

This completes the proof of the induction hypothesis, and in particular that the number of batches in $T_2$ at any time is no more than the number of batches in system $T_1$. The fact that the average latency is proportional to the average number of batches in the system implies that the average latency in system $T_1$ is no smaller than in $T_2$. Finally, the constructions of the two systems $T_1$ and $T_2$ ensured that system $T_2$ is worse than $S_2$, and system $T_1$ is statistically identical to $S_1$, thus leading to the desired result.

Finally, suppose system $T_1$ employs a fixed request-degree $r<n$ for all batches. Further suppose that the light-everywhere distribution is such that~\eqref{eq:lighter_definition} holds with a strict inequality for a set of events that have a probability bounded away from zero. Under this setting, the aforementioned construction is such that system $T_1$ is better than system $S_1$ by a non-trivial amount, and as a result, the average latency in system $S_2$ is strictly smaller than that of $S_1$.
\end{proof}

\begin{proof}[\textbf{Proof of Theorem~\ref{thm:removalCosts_memless}} (centralized, memoryless service, non-zero removal cost, $k=1$, high load)]\label{proof:removalCosts_memless}
The proof is identical to that of Theorem~\ref{thm:lighter_noflood}. The system $T_2$ is constructed to be `better' than system $S_2$ by assuming zero removal costs in $T_2$.
\end{proof}

\begin{proof}[\textbf{Proof of Theorem~\ref{thm:memoryless_general_k_distributed}} (distributed, memoryless service, no removal cost, general $k$)]\label{proof:memoryless_general_k_distributed}
In order to get to the desired results, we shall first compare a system with distributed buffers to an analogous system that has a central buffer. The scheduling policy in either setting needs to make two kinds of decisions: the number of redundant requests for each batch, and the precise set of servers to which these requests are assigned. Firstly, observe that under the redundant requesting policy of $\request=n$ for all batches, the choice of the $\request~(=n)$ servers to which the jobs are assigned does not require any decision to be made. The centralized and the distributed systems thus are identical in this case and hence have the same average latency. Secondly, we know from the results of Section~\ref{sec:analytical} that under all the arrival and service time distributions considered here, the choice of $\request=n$ for all batches is optimal under a centralized scheme. Thirdly, for any fixed redundant requesting policy, the average latency under the centralized scheme will be no more than the average latency under the distributed scheme. This is because the first-come first-served policy as described in Algorithm~\ref{alg:redundant_requests} minimizes average latency, and moreover, the policy of the system with a centralized buffer has more information (about the system) as compared to the one operating under distributed buffers. It thus follows that even in the case of distributed buffers, the average latency is minimized when the request-degree for each batch is $n$.
\end{proof}

\begin{proof}[\textbf{Proof of Theorem~\ref{thm:heavy_distributed}} (distributed, heavy-everywhere service, no removal cost, $k=1$, high load)]\label{proof:heavier_noflood_distributed}
Identical to the proof of Theorem~\ref{thm:heavier_flood}.
\end{proof}

\begin{proof}[\textbf{Proof of Theorem~\ref{thm:lighter_noflood_distributed}} (distributed, light-everywhere service, any removal cost, $k=1$, high load)]\label{proof:lighter_noflood_distributed}
Identical to the proof of Theorem~\ref{thm:lighter_noflood}.
\end{proof}

\begin{proof}[\textbf{Proof of Theorem~\ref{thm:removalCosts_memless_distributed}} (distributed, memoryless service, non-zero removal cost, $k=1$, high load)]\label{proof:removalCosts_memless_distributed}
Identical to the proof of Theorem~\ref{thm:removalCosts_memless}.
\end{proof}


\end{document}